\documentclass[letterpaper]{article} 
\usepackage{aaai2026}
\nocopyright  
\usepackage{times}  
\usepackage{helvet}  
\usepackage{courier}  
\usepackage[hyphens]{url}  
\usepackage{graphicx} 
\usepackage{natbib}  
\usepackage{caption} 
\usepackage{algorithm}
\usepackage{algorithmic}
\usepackage{booktabs} 
\usepackage{array} 
\usepackage{amssymb} 
\usepackage[T1]{fontenc}
\usepackage{upquote} 
\usepackage{listings} 
\definecolor{promptrule}{gray}{0.45}
\lstdefinestyle{promptstyle}{%
  basicstyle=\ttfamily\footnotesize,
  frame=single,
  framesep=5pt,
  framerule=0.4pt,
  xleftmargin=5.4pt,
  xrightmargin=5.4pt,
  rulecolor=\color{promptrule},
  breaklines=true,
  breakatwhitespace=false,
  postbreak=\mbox{\textcolor{promptrule}{$\hookrightarrow$}\space},
  columns=fullflexible,
  keepspaces=true,
  showstringspaces=false,
  upquote=true, 
  extendedchars=true,
  aboveskip=9pt,
  belowskip=9pt,
  escapeinside={(*@}{@*)},
}

\title{Cherry-pick Override: LLM Judges Under-use\\the Non-Directional Verdicts Their Contract Authorizes}
\author{
    Haoran Xu
}
\affiliations{
    University of Glasgow\\
    2614067X@student.gla.ac.uk
}

\begin{document}
\maketitle

\begin{abstract}
Evidence-grounded fact verification increasingly uses LLM judges to turn
evidence into terminal verdicts. Many task contracts deliberately include
\emph{non-directional} verdicts---\textsc{Conflicting} for materially mixed
evidence, \textsc{Not Enough Evidence} for absent evidence---so that a system
can decline to assert a direction. Returning \textsc{Supports} or
\textsc{Refutes} on a claim whose authorized verdict is non-directional is
\emph{directional overcommitment}: a label error whose output is a direction
the contract did not authorize. Its conflict instance---both strands present,
one of them returned---is \textbf{Cherry-pick Override} (CPO); unlike the
insufficiency instance it cannot be repaired by retrieving more, because
nothing was missing.

On AVeriTeC's gold-\textsc{Conflicting} subset ($N_{\mathrm{C}}{=}150$) a
four-option typed panel commits directionally on $18.7\%$ of claims, and four
frozen contemporary judges (two proprietary, two open-weight) on
$21.3$--$48.0\%$, with $24.0$--$38.0\%$ on VitaminC-Mixed, a constructed
stress test on which the ranking reverses. These are conservative lower
bounds: the shared prompt specifies \textsc{Conflicting} in detail, forbids
collapsing to a direction, and demonstrates the label three times.

The failure is not specific to conflict: overcommitment on
gold-\textsc{Insufficient} cases runs at a comparable rate for every judge and
exceeds CPO for three of four on VitaminC-Mixed. Asked in a fresh context, two
judges identify the conflict on about two thirds of their own CPO claims,
against under $0.12$ where they committed correctly: the recognition is
available and does not reach the terminal action. Typed vocabulary, panel
voting, confidence thresholds, a structural validator and self-decomposition
each leave a substantial residual, and the last buys part of its reduction by
relabelling one-sided evidence as conflicting. Every rate is dataset-defined;
we report no human-confirmed prevalence.

\end{abstract}

\section{Introduction}\label{sec:intro}
Evidence-grounded fact-verification systems increasingly use LLM judges
to turn evidence into terminal verdicts. Their task contracts often do
more than list the directional outcomes: AVeriTeC offers
\textsc{Conflicting} and \textsc{Not Enough Evidence} alongside
\textsc{Supported} and \textsc{Refuted}
\citep{schlichtkrull2023averitec}. These two labels are
\emph{non-directional}. A contract that includes them has already decided
that some claims should not receive a direction, and has given the system
the vocabulary to say so. This paper asks what judges do with that
vocabulary.\footnote{Code, prompts, and all raw model outputs:
\url{https://github.com/HrxuAlbert/cherry-pick-override}. What is withheld,
and why, is at the end of \S\ref{sec:method}.}

Consider the benchmark claim \emph{``face masks cause hypoxia.''} Its
annotated evidence record contains a general refutation for healthy
individuals and a qualification concerning prolonged N-95 use by patients
with preexisting lung disease. The typed panel reads both strands yet
returns \textsc{Refutes} with mean confidence $0.96$. Under the benchmark
contract this is not merely an uncertain prediction. It is a directional
terminal decision on a case labelled for conflict preservation, taken
while the label that fits was available and unused.

Adjacent work studies how models detect or resolve conflicting evidence,
how conflict-aware QA systems communicate multiple answers, and when a
system should abstain \citep{jiayang2024econ,jin2024tugofwar,
nachshoni2025natconfqa,han2026evidenceconflicts}. These settings do not
isolate the same terminal-action error. In particular, models may identify
a conflict in their reasoning without allowing that recognition to change
their final prediction \citep{wallat2026facts}. Standard label accuracy
also merges this case with other mistakes, while confidence-based selective
prediction asks whether an output is reliable rather than whether a
directional output is licensed by the evidence structure
\citep{geifman2017selective,kamath2020selectiveqa}.

We study this as a contract-level diagnostic setting. \emph{Directional
overcommitment} is the event that a verifier returns \textsc{Supports} or
\textsc{Refutes} on a claim whose task-authorized verdict is
non-directional. It has two instances, one per non-directional label, and
they differ in what would repair them. When the authorized verdict is
\textsc{Conflicting}, both a supporting and a refuting strand are present
and the system returns one of them; we call this \textbf{Cherry-pick
Override} (CPO), after the selection it performs. When the authorized
verdict is \textsc{Insufficient}, no strand is present and the system
supplies one anyway. The insufficiency instance is a failure under
missing evidence and is in principle addressable by retrieving more; CPO
is a failure under complete evidence and is not, because nothing was
missing. That asymmetry, not a distinct trigger, is why the conflict
instance carries its own name here.

The definition is contract-relative. A three-option judge without
\textsc{Conflicting} lies outside the contract, even though its
directional rate on conflicting gold labels shows why exposing the label
matters. In contract, a typed panel still commits directionally on
$28/150$ AVeriTeC gold-\textsc{Conflicting} claims ($18.7\%$), and a
separately frozen replication with four contemporary individual judges,
two of them open-weight, finds non-zero CPO for every model on both
AVeriTeC and VitaminC-Mixed. These rates are conservative: the shared
prompt enumerates trigger conditions for \textsc{Conflicting}, forbids
collapsing such a case to a direction, and demonstrates the label three
times, so the residual is what survives an instruction designed to
remove it.

Characterizing the problem requires separating four effects. First, the
label vocabulary must expose the non-directional verdict; otherwise
directional output is partly forced by construction. Second, systems with
different commit rates must be compared on a common denominator and,
where relevant, at matched coverage. Third, a judge can be highly
confident in the direction it selected even when the evidence record
contains both support and refutation, so confidence and evidence
structure must be tested for complementary screening information without
treating either as ground truth. Fourth, the two instances must be
measured side by side: a rate observed only on conflicting cases cannot
distinguish a response to conflict from a general reluctance to use the
non-directional half of the vocabulary.

Our analysis addresses these requirements with an explicit four-label
contract, same-denominator metrics, paired-bootstrap comparisons, and a
diagnostic ladder covering typed vocabulary, panel aggregation,
confidence selection, and structural filtering. The ladder is not a
leaderboard and the final two-channel controller is not presented as a
solution. It is a reference probe for asking whether two imperfect
signals reject different cases. This framing keeps the empirical question
primary: does directional overcommitment remain after the task makes a
non-directional verdict available, and which common interventions fail to
remove it?

The paper contributes five elements. First, it defines directional
overcommitment as a contract-level failure with two instances, names the
conflict instance CPO, and gives a scope boundary separating it from
forced-choice output, generic abstention, hallucination, and ordinary
direction error (Section~\ref{sec:method}). Second, it provides a
same-denominator empirical characterization on AVeriTeC, including the
roles of panel aggregation and confidence, with VitaminC-Mixed as a
cross-dataset control, and tests whether the diagnosis persists across a
frozen panel of contemporary proprietary and open-weight individual
judges (Section~\ref{sec:results}). Third, it reports the insufficiency
instance on the same judges and substrates, and finds the failure is
\emph{not} specific to conflict: the two rates are comparable, and on
VitaminC-Mixed three of four judges overcommit more on
gold-\textsc{Insufficient} than on gold-\textsc{Conflicting}
(Section~\ref{sec:results}). Fourth, it separates recognition from
commitment: a fresh-context probe asks each judge whether the evidence is
materially two-sided, and on the two judges whose probe meets
pre-registered validity conditions the conflict is recognised on about two
thirds of that judge's own CPO claims, so the failure is not a failure to
see the conflict (Section~\ref{sec:recognition}). Fifth, it uses targeted
diagnostics to show
that structural and confidence signals are non-substitutable in the
tested settings, while explicitly bounding the claim where
matched-coverage intervals include zero (Section~\ref{sec:ablation}).

\section{Related Work}
\label{sec:related}

\paragraph{Evidence-grounded fact verification.}
Evidence-grounded verification benchmarks typically evaluate whether a system
returns the correct verdict together with adequate evidence.  AVeriTeC
provides a particularly relevant substrate: its four-way taxonomy includes
\textsc{Conflicting Evidence/Cherry-picking} alongside \textsc{Supported},
\textsc{Refuted}, and \textsc{Not Enough Evidence}, so a verifier is not
forced to choose a direction whenever its evidence is mixed
\citep{schlichtkrull2023averitec}.  The subsequent shared task evaluates the
joint quality of verdicts and retrieved evidence
\citep{schlichtkrull2024sharedtask}.  These evaluations are essential for
measuring fact verification, but ordinary label accuracy does not isolate the
action-relevant error considered here: issuing a directional terminal verdict
on a case for which preserving conflict is the authorized task outcome.

The substrate's own reliability bounds what any study built on it can claim.
AVeriTeC reports substantial inter-annotator agreement on verdicts,
$\kappa=0.619$, but does not report agreement broken down by verdict class
\citep{schlichtkrull2023averitec}. The reliability of the
\textsc{Conflicting} class specifically --- the class from which this paper's
primary denominator is drawn --- is therefore not established by the original
annotation effort, and an aggregate $\kappa$ on a four-way task whose mass sits
in \textsc{Supported} and \textsc{Refuted} can be substantial while a minority
class is much less reliable. We attempted an independent re-audit to close
that gap and did not complete it (\S\ref{sec:limits:contract}); every rate in
this paper is therefore dataset-defined, and we make no claim that the
\textsc{Conflicting} labels it rests on have been independently validated.

\paragraph{LLM judges as measuring instruments.}
This paper uses LLM judges, individually and as a majority-vote panel, as the
system under test. That instrument has documented failure modes of its own:
position, verbosity, and self-enhancement biases are all reported for
LLM-as-a-judge evaluation \citep{zheng2023mtbench}. Two consequences matter
here. Our panel aggregates three judges by majority vote, so any shared bias is
reinforced rather than averaged away, which is one reading of the amplification
we report in \S\ref{sec:results}. And because we never compare judges against
each other on the same output, the pairwise biases that dominate that
literature do not directly threaten our estimand, which is computed per judge
against a fixed dataset label.

\paragraph{Disagreement as signal rather than noise.}
The verdict this paper treats as authorized --- preserve the conflict rather
than resolve it --- is close to a label that records unresolved disagreement,
and there is a literature arguing that such disagreement is not annotation
error to be cleaned away. Pavlick and Kwiatkowski show that human disagreement
about textual inference persists as more ratings are collected and as context
is varied, and argue for evaluating against the distribution of plausible human
judgments rather than a single aggregate \citep{pavlick2019disagreements}.
Plank generalizes the position and surveys how human label variation should
enter data, modelling, and evaluation \citep{plank2022labelvariation}. We
inherit the methodological consequence, and it shaped what we report from the
calibration round we did run: pre-adjudication disagreement is described as a
measurement rather than as a defect that discussion should remove
(\S\ref{sec:limits:contract}).

\paragraph{Why a non-directional verdict might be under-used.}
Our results describe a behaviour without explaining it, and one candidate
explanation comes from outside verification. Kalai et al.\ argue that standard
training and evaluation reward a confident guess over an acknowledgement of
uncertainty, because accuracy-style metrics score an admitted unknown the same
as a wrong answer \citep{kalai2025hallucinate}. A contract that offers
\textsc{Conflicting} and \textsc{Not Enough Evidence} is exactly a setting
where that incentive would bite: both are non-directional, and under an
accuracy metric neither is rewarded more than a lucky direction. We do not test
this account --- it concerns training incentives we cannot observe --- but it
predicts the pattern we report in \S\ref{sec:results}, where under-use extends
to both non-directional verdicts rather than being specific to conflict.

\paragraph{Resolving conflict versus preserving conflict.}
Several RAG studies treat disagreement as a problem to resolve.  Jin et al.\
study conflicts between parametric and retrieved knowledge and introduce a
decoding method intended to resolve them \citep{jin2024tugofwar}.  Ge et al.\
study source-credibility information for fact checking under conflicting
evidence \citep{ge2025confact}, while Choi et al.\
learn a context-assessment signal to guide a RAG system toward the more
reliable source \citep{choi2025care}.  These methods address an important
regime in which source quality, time, scope, or additional evidence licenses
a directional resolution.  ECON directly evaluates conflict detection and
resolution and finds that LLMs can detect many conflicts yet often favour one
piece of evidence without justification \citep{jiayang2024econ}.  We study
the complementary contract in which preserving conflict is an available
terminal action.  The point is not to declare every disagreement unresolved,
but to diagnose a directional commitment on cases assigned to conflict
preservation.

\paragraph{Recognizing conflict and explaining uncertainty.}
Conflict-aware QA instead asks a system to expose disagreement in its answer.
NatConfQA requires models to identify specific conflicting answer pairs and
evaluates conflict-aware multi-answer responses \citep{nachshoni2025natconfqa}.
CLUE identifies evidence relations to explain sources of uncertainty in
automated fact checking \citep{sun2026clue}.  These works make conflict
recognition and communication explicit, but their output object is an answer
set or an uncertainty explanation rather than a typed fact-verification
terminal action.  Cherry-pick Override (CPO) specializes the question to a
verifier that has \textsc{Conflicting} available as a final label: did the
system nevertheless make an unauthorized \textsc{Supports}/\textsc{Refutes}
commitment?

\paragraph{From recognition to final prediction.}
The closest conceptual precursor is work showing that conflict recognition
does not necessarily affect a model's final output.  In temporal knowledge
conflicts, larger models can verbalize that a fact may have changed while
failing to propagate that judgment to their prediction
\citep{wallat2026facts}.  CPO turns this recognition--prediction gap into a
contract-specific terminal-action diagnostic: the output space contains a
typed conflict-preserving verdict, and the measured event is choosing a
direction instead.

\paragraph{Selective prediction and conflict-aware abstention.}
Selective prediction formalizes the risk--coverage trade-off induced by a
reject option \citep{geifman2017selective}; later QA work shows that raw
confidence alone can be unreliable for deciding when to abstain
\citep{kamath2020selectiveqa}.  For LLM systems, self-evaluation can provide
selective-generation scores \citep{ren2023selective}, and selective
LLM-judge pipelines can decide whether to trust a judge's decision or
escalate the evaluation to a stronger judge \citep{jung2025trust}.  Recent
work also evaluates an abstention score that
combines confidence with a detector of evidence conflict in controlled
biomedical RAG \citep{han2026evidenceconflicts}.  Consequently, we do not
claim that combining structural and confidence signals is novel.  Instead, we
test whether structural evidence provides incremental screening information
for benchmark-defined CPO events.  A routed \textsc{No-Commit} state is a
possible deployment response to that event, not a fifth verification label
and not the paper's primary contribution.

Table~\ref{tab:positioning} summarizes the distinction by evaluation target,
rather than by whether a system could in principle be adapted to another
setting.

\begin{table*}[t]
\centering
\small
\setlength{\tabcolsep}{5pt}
\renewcommand{\arraystretch}{1.0}
\caption{Positioning CPO among nearby settings.  ``--'' denotes an aspect
outside the cited setting's primary evaluation target, not an absent model
capability.}
\label{tab:positioning}
\resizebox{\textwidth}{!}{%
\begin{tabular}{>{\raggedright\arraybackslash}p{0.18\textwidth}
                >{\raggedright\arraybackslash}p{0.25\textwidth} c c c c}
\toprule
Systems & Primary objective & \begin{tabular}[c]{@{}c@{}}Resolve\\ toward one\\ answer\end{tabular} & \begin{tabular}[c]{@{}c@{}}Expose or\\ withhold under\\ conflict\end{tabular} & \begin{tabular}[c]{@{}c@{}}Typed\\ \textsc{Conflicting}\\ terminal label\end{tabular} & \begin{tabular}[c]{@{}c@{}}Directional\\ commitment is\\ error target\end{tabular} \\
\midrule
ECON; Tug-of-War; ConFiAct; CARE & Resolve toward a reliable source & \(\checkmark\) & -- & -- & -- \\
\addlinespace[3pt]
NatConfQA; CLUE & Expose conflicting answers & -- & \(\checkmark\) & -- & -- \\
\addlinespace[3pt]
Conflict-aware score & Abstain on unreliable answers & -- & \(\checkmark\) & -- & -- \\
\addlinespace[3pt]
WikiRecentChanges & Measure reasoning--prediction transfer & \(\checkmark\) & -- & -- & -- \\
\addlinespace[3pt]
\textbf{CPO} & Diagnose unauthorized direction & -- & \(\checkmark\) & \(\checkmark\) & \(\checkmark\) \\
\bottomrule
\end{tabular}
}
\end{table*}

\section{Background}\label{sec:background}

\paragraph{Selective prediction and the abstention lineage.}
The idea that a classifier should sometimes refuse to answer goes back
to Chow's rule for character recognition~\citep{chow1957optimum}, which
formalized abstention as the optimal action whenever no class posterior
crosses a confidence threshold. Modern selective classification
inherits this framing: a scoring function ranks inputs by reliability,
and a coverage--risk curve traces what is gained by ceding low-confidence
items~\citep{elyaniv2010foundations,geifman2017selectivenet}. Across
this lineage the implicit licence to abstain is \emph{uncertainty about
which directional answer to give}. Cherry-pick Override breaks that
template: the judge is confident, a directional answer is reachable
from the inputs, and the reason to withhold is not in the judge's
posterior alone but in the structure of the evidence: supporting and
refuting strands coexist on a load-bearing subclaim. A purely
confidence-driven abstention rule cannot see this regime, because
nothing in the judge's own signal flags it.

\paragraph{Issuing a verdict versus being entitled to commit it.}
A second, older distinction is more useful here: producing a judgment
and being authorized to make that judgment the system's public output
are not the same act. The separation is familiar in the philosophy of
assertion (one can hold a belief without being entitled to assert it)
and in accountability framings of automated decision-making, where the
component that \emph{decides} is deliberately distinct from the
component that \emph{commits} on the system's behalf. The separation
principle of \S\ref{sec:method:principle} is the LLM-as-judge
operationalization of this older distinction: the typed panel still
\emph{generates} a directional verdict, but a separate controller
decides whether that verdict is authorized to leave the system, and
\textsc{No-Commit} names the controller-internal state of declining
authorization.

\paragraph{\textsc{Conflicting} is not \textsc{Insufficient}.}
Selective prediction often represents every non-answer through a single
reject option, whereas this task contract distinguishes two reasons for a
non-directional verdict. \textsc{Insufficient} denotes inadequate evidence;
\textsc{Conflicting} preserves material support--refutation conflict.
Atomic-claim decomposition for factuality assessment~\citep{min2023factscore}
treats each atomic claim as a separate truth-bearing unit but does not provide
this four-way terminal-action distinction. The judges studied here
distinguish the two when choosing which non-directional label to use,
returning \textsc{Insufficient} on $69\%$ of gold-\textsc{Insufficient}
cases and \textsc{Conflicting} on $77\%$ of gold-\textsc{Conflicting}
cases on AVeriTeC, but they commit directionally at comparable rates on
both (\S\ref{sec:method:not}). The
selective-prediction literature has not had a reason to mark this
boundary, because its single confidence axis cannot represent it; the
task contract introduced in \S\ref{sec:method:contract} is what makes
the distinction operational by exposing \textsc{Conflicting} as a
first-class verdict that the controller can authorize.

\section{Problem Definition and Diagnostic Protocol}\label{sec:method}
\newtheorem{lemma}{Lemma}


\subsection{Task contract and scope}
\label{sec:method:contract}
We treat fact verification as a per-claim
\emph{commitment-authorization} problem. A \emph{typed verification
contract} is a label set $\mathcal{Y}$ partitioned into
\emph{directional} verdicts $\mathcal{D}=\{\textsc{Supports},
\textsc{Refutes}\}$, which assert a direction about the claim, and
\emph{non-directional} verdicts $\mathcal{N}\subseteq\{\textsc{Conflicting},
\textsc{Insufficient}\}$, which decline to. We write \textsc{Insufficient}
throughout for the label AVeriTeC calls \textsc{Not Enough Evidence} and the
prompt spells \texttt{NOT\_ENOUGH\_INFO}. A contract with
$\mathcal{N}\neq\emptyset$ has already decided that some claims should
not receive a direction, and has given the system the vocabulary to say
so. Our object of study is what systems do with that vocabulary.

We do not claim that any particular claim is absolutely undecidable, nor
that downstream tasks must always avoid a leaning answer. The diagnosis
is contract-relative throughout: it applies exactly when the schema
authorizes a non-directional verdict and the claim's gold label is one
(Table~\ref{tab:scope_boundary}).

\begin{table}[t]
\footnotesize
\centering
\setlength{\tabcolsep}{4pt}
\begin{tabular}{p{0.72\columnwidth}c}
\toprule
Condition & In scope? \\
\midrule
Schema exposes a non-directional verdict ($\mathcal{N}\neq\emptyset$) & yes \\
Gold label is \textsc{Conflicting} (materially mixed) & yes: CPO \\
Gold label is \textsc{Insufficient} (evidence absent) & yes: insuff. \\
Schema has only S/R (forced choice, $\mathcal{N}=\emptyset$) & no \\
Gold label is S or R (accuracy error) & no \\
Deployment policy waives non-directional verdicts & no \\
\bottomrule
\end{tabular}
\caption{Scope conditions. The diagnosis is contract-based, and covers
both non-directional verdicts rather than \textsc{Conflicting} alone.}
\label{tab:scope_boundary}
\end{table}

A \emph{typed LLM panel} $\Pi=\{J_1,J_2,J_3\}$ returns a typed verdict
$\hat{y}\in\mathcal{Y}=\{\textsc{S},\textsc{R},\textsc{I},\textsc{C}\}$
and confidences $p_i\in[0,1]$; we aggregate by majority vote with a
\textsc{C} tie-break and write $\bar{p}=\frac{1}{3}\sum_i p_i$. Outputs
partition into three disjoint categories: \textsc{S}/\textsc{R} are
\emph{directional verdicts}; \textsc{C} and \textsc{I} are
\emph{non-directional verdicts}; \textsc{No-Commit} is a
\emph{controller routing state}, not a verification verdict
(\S\ref{sec:method:principle}).

\subsection{Directional overcommitment and its two instances}
\label{sec:method:cco}
For a claim whose gold label is a non-directional verdict
$g\in\mathcal{N}$, \emph{directional overcommitment} is the event
\[
  \mathrm{Ovc}_g \;:=\;
  \{\mathrm{pred}\in\mathcal{D}\;\wedge\;\mathrm{gold}=g\}.
\]
It is a label error with a specific action structure: the system asserts
a direction on a claim for which the contract authorized it to decline.
The two available values of $g$ give two instances, and they are not
interchangeable.

\paragraph{The conflict instance: Cherry-pick Override (CPO).}
When $g=\textsc{Conflicting}$, the evidence contains material support and
material refutation for the same operative claim, and the system returns
one of them. We call this \textbf{Cherry-pick Override}: the name
describes the selection, which is what cherry-picking denotes. $N_{\mathrm{C}}$
denotes this subset and $\mathrm{CPO}\equiv\mathrm{Ovc}_{\mathrm{C}}$.

\paragraph{The insufficiency instance.}
When $g=\textsc{Insufficient}$, the evidence does not establish the claim
in either direction, and the system supplies one anyway. We write
$\mathrm{Ovc}_{\textsc{I}}$ and report it alongside CPO throughout.

\paragraph{Why the conflict instance is the harder one.}
The two instances differ in what would repair them, and this is why CPO
carries its own name rather than being folded into the general event.
Insufficiency overcommitment is a failure \emph{under missing evidence}:
in principle a better retriever, a larger corpus, or a further hop can
supply what the system lacked, after which a direction may become
authorized. CPO is a failure \emph{under complete evidence}. Both strands
were present, in the context window, and the system still collapsed them.
No amount of additional retrieval addresses it, because nothing was
missing. A pipeline that responds to overcommitment by retrieving harder
will improve the insufficiency instance and leave the conflict instance
untouched.

This paper accordingly centres CPO while measuring both, and
\S\ref{sec:results} shows the measurement was necessary: the two rates
are close, and on one substrate the insufficiency instance is the larger
of the two.

\subsection{What directional overcommitment is not}
\label{sec:method:not}
Both instances are distinct from neighbouring failures
(Table~\ref{tab:failure_taxonomy}): direction error and hallucination
concern \emph{wrong direction} or \emph{commit without evidence};
calibration concerns overconfidence regardless of evidence structure;
generic abstention failure concerns \emph{withholding} on a one-sided
claim the system could have committed correctly. Directional
overcommitment is the subtype of label error in which the erroneous
output is a \emph{direction} and the contract offered an alternative that
was not a direction.

\paragraph{\textsc{Conflicting} and \textsc{Insufficient} are different
task states, but not different failure regimes.}
The two non-directional verdicts encode different things:
\textsc{Insufficient} is the contract's response to inadequate evidence,
\textsc{Conflicting} preserves material support--refutation conflict. Our
judges do discriminate between them \emph{when choosing which one to
use}. On AVeriTeC the panel returns \textsc{Insufficient} on $69\%$
($24/35$) of gold-\textsc{Insufficient} cases and \textsc{Conflicting} on
only $14\%$ ($5/35$) of them; conversely it returns \textsc{Conflicting}
on $77\%$ ($116/150$) of gold-\textsc{Conflicting} cases and
\textsc{Insufficient} on only $4\%$ ($6/150$).

They do not discriminate between them when deciding whether to commit at
all. For the same panel $\mathrm{Ovc}_{\textsc{I}}$ is $17\%$ ($6/35$)
against a CPO rate of $18.7\%$ ($28/150$); the rates are within two
percentage points and their Wilson intervals overlap almost completely
($[0.08,0.33]$ and $[0.13,0.26]$). \S\ref{sec:results} reports the same
contrast for four contemporary judges on two substrates and reaches the
same conclusion, with the insufficiency instance the larger of the two
for three of four judges on VitaminC-Mixed.

We therefore state plainly what this paper does not claim: directional
overcommitment is \emph{not} specific to conflict, and CPO is an instance
of the contract-level failure rather than a mechanism of its own. What
makes the conflict instance worth naming and studying separately is not a
distinct trigger but a distinct remedy structure --- it is the instance
that more evidence cannot fix.

\begin{table}[h]
\footnotesize
\centering
\setlength{\tabcolsep}{4pt}
\begin{tabular}{p{0.26\columnwidth}p{0.60\columnwidth}}
\toprule
Failure & Operational definition \\
\midrule
Direction error              & gold $\in$ \{S,R\}, pred $\in$ \{S,R\}, pred $\neq$ gold \\
\addlinespace[2pt]
Hallucination                & pred is directional, not entailed by any evidence in $E$ \\
\addlinespace[2pt]
Calibration error            & overconfidence on any pred, independent of evidence \\
\addlinespace[2pt]
Abstention failure           & withholds on a one-sided claim it could commit correctly \\
\midrule
\multicolumn{2}{p{0.85\columnwidth}}{\emph{Directional overcommitment} (this paper): pred $\in\mathcal{D}$, gold $\in\mathcal{N}$} \\
\addlinespace[2pt]
\quad\textbf{Cherry-pick Override} & \textbf{gold $=$ \textsc{C}: both strands present, one returned} \\
\addlinespace[2pt]
\quad Insufficiency instance & gold $=$ \textsc{I}: no strand present, one returned \\
\bottomrule
\end{tabular}
\caption{Operational distinctions between directional overcommitment and
neighbouring failures. The categories can co-occur; the table separates
their defining conditions. The two instances share a definition and
differ in whether the missing information could have been retrieved.}
\label{tab:failure_taxonomy}
\end{table}

\subsection{Evaluation sample and evidence source}
\label{sec:method:sample}
The AVeriTeC evaluation set is fixed at $N{=}285$ claims with the
composition in Table~\ref{tab:sample_composition}. Two strata are
\emph{exhaustive} rather than sampled: the development split contains
only 38 \textsc{Conflicting} claims and only 35 \textsc{Not Enough
Evidence} claims, and we take all of both. The
\textsc{Conflicting} subset is then topped up to 150 from the training
split, and the \textsc{Supported} and \textsc{Refuted} strata are drawn
from the development split. $N_{\textsc{I}}{=}35$ is therefore not a
design choice but the size of the entire development
\textsc{Not Enough Evidence} class, which bounds the precision of every
\textsc{Insufficient} contrast we report. The realized sample is
hash-pinned as a source map, so the exact 285 claims are reproducible;
the random seed used to draw the three sampled strata was not recorded
at construction time, so those draws are reproducible from the stored
map but not re-derivable from the seed.

\begin{table}[h]
\footnotesize
\centering
\begin{tabular}{llrrl}
\toprule
Gold label & Split & $n$ & Available & Selection \\
\midrule
\textsc{Conflicting}  & dev   & 38  & 38  & exhaustive \\
\addlinespace[2.5pt]
\textsc{Conflicting}  & train & 112 & 195 & sampled \\
\addlinespace[2.5pt]
\textsc{Insufficient} & dev   & 35  & 35  & exhaustive \\
\addlinespace[2.5pt]
\textsc{Supported}    & dev   & 50  & 122 & sampled \\
\addlinespace[2.5pt]
\textsc{Refuted}      & dev   & 50  & 305 & sampled \\
\midrule
Total & & 285 & & \\
\bottomrule
\end{tabular}
\caption{Composition of the fixed AVeriTeC evaluation sample.}
\label{tab:sample_composition}
\end{table}

\paragraph{Evidence is annotator-curated, not retrieved.}
Each judge receives AVeriTeC's gold question--answer evidence pairs for
the claim, rendered in their stored order; no retrieval system is run.
This is an oracle-retrieval setting, and the choice is deliberate: it
removes the confound in which a directional verdict is correct given a
retrieved record that simply omits the opposing strand. A directional
commitment measured here cannot be attributed to the system never having
seen the other side. The cost is that the rates we report are not
comparable to those of a deployed RAG pipeline, and our external
validity extends only to gold-evidence verification.

\subsection{A same-denominator diagnostic protocol}
\label{sec:method:protocol}
We evaluate commitment authorization with six metrics reported on a
\emph{single denominator} $N$, so that comparisons across controllers
cannot be gamed by shifting the evaluated subset. Let $N_{\mathrm{S/R}}$
and $N_{\mathrm{C}}$ denote the counts of pure-S/R and
gold-\textsc{Conflicting} cases.

\begin{itemize}\setlength\itemsep{1pt}\setlength\parskip{0pt}
  \item $\mathrm{Cov}$ (commit coverage): fraction of $N$ committed to
        $\textsc{Supports}/\textsc{Refutes}$.
  \item $\mathrm{SE}$ (selective error): fraction of committed cases
        that are wrong (subset denominator).
  \item $\mathrm{CPO}_N$: fraction of $N$ that are
        gold-\textsc{Conflicting} yet committed directionally; the
        reviewer-proof rate for cross-controller comparison.
  \item $\mathrm{CPO}_{\mathrm{C}}$: fraction of the
        gold-\textsc{Conflicting} subset ($N_{\mathrm{C}}$) that the
        system commits directionally; used to compare judges restricted
        to the mixed-evidence regime.
  \item $\mathrm{Acc}_{\mathrm{S/R}}$: directional accuracy on the
        pure-evidence subset, with denominator $N_{\mathrm{S/R}}$.
  \item $\mathrm{Rec}_{\mathrm{C}}$: fraction of
        gold-\textsc{Conflicting} claims predicted \textsc{Conflicting},
        with denominator $N_{\mathrm{C}}$.
\end{itemize}

\noindent
Controllers are compared only at \emph{matched coverage}; paired
differences are reported with a \emph{paired bootstrap (5000 resamples)}
on per-case outcomes, without multiple-comparison correction. To test
whether a controller's veto behaviour is \emph{structurally targeted}
rather than merely an equal-sized random promotion, we add a
\emph{random Stage-1 null} that matches the controller's action
mechanically: if the controller promotes $k$ of the baseline's
directional commits to \textsc{Conflicting}, the null draws $2000$
random subsets of the same size $k$ from the baseline's directional
commits and promotes them to \textsc{Conflicting} as well. We report
the controller's empirical one-sided $p$-value against the resulting
distribution on each metric. A small $p$ here is \emph{structural
selectivity} (the controller picked a non-random subset to promote); the
test does not certify magnitude dominance. A non-matched alternative
that demotes commits to \textsc{No-Commit} would leave
$\mathrm{Rec}_{\mathrm{C}}$ mechanically unchanged and is therefore not
the fair null for this comparison.

\subsection{The separation principle}
\label{sec:method:principle}
Standard LLM-as-judge pipelines treat the judge's verdict as the
system's commitment, conflating verdict \emph{generation} with verdict
\emph{authorization}. Under mixed evidence the two come apart: a strong
judge can recognize internal conflict yet still output a direction
because the prompt or label vocabulary requires one. The
\emph{separation principle} states that these decisions should be
implemented by separate components, with an external lightweight
controller mediating between the LLM's typed proposal and the system's
authorized commitment. \textsc{No-Commit} is a controller-internal
routing state, not a verification verdict: it marks the controller
declining to authorize a commitment and leaves the downstream choice
(escalate to a human, re-retrieve, defer) to the deploying application.

\subsection{A minimal two-channel reference probe}
\label{sec:method:controllers}
This probe is included only to instantiate the separation principle in a
transparent way; it is not an optimized controller and should not be
read as a solution to CPO. A \emph{certificate validator} $V$, run
independently of the panel, decomposes the claim into subclaims
$\{s_j\}$ and assigns each an evidence state
$\tau_j\in\{\mathrm{supports},\mathrm{refutes},\mathrm{mixed},
\mathrm{insufficient}\}$~\citep{schlichtkrull2023averitec}. Operationally, a subclaim is
\emph{material} when the validator does not assign it the
\texttt{insufficient} state; we write $\mathrm{material\_mixed}(c, E) :=
\exists\, j\colon \tau_j=\mathrm{mixed} \wedge s_j$ is material. We
treat materiality as an \emph{operational output of the validator}, not
as a ground-truth semantic primitive: the decomposition is imperfect
(Cases~5 and~6 in Appendix~\ref{sec:appendix:cases}), so
\texttt{material\_mixed} functions as a structural veto input rather
than a settled judgment about the evidence.

\paragraph{Probe E (confidence only).}
Authorize a directional commitment only when $\bar{p}\ge\tau$;
otherwise return \textsc{No-Commit}. Non-directional proposals pass
unchanged.

\paragraph{Probe F (two-channel: structural veto $+$ confidence).}
Stage~1 vetoes any directional proposal flagged
\texttt{material\_mixed} (downgrade to \textsc{Conflicting}); Stage~2
applies the confidence threshold as in $E$ (Algorithm~\ref{alg:vgrctc}).
The probe has two deterministic stages, one scalar threshold, and no
learned components.

\begin{algorithm}[t]
\caption{Minimal two-channel authorization probe.}
\label{alg:vgrctc}
\begin{algorithmic}[1]
\REQUIRE typed panel proposal $\hat{y}$; per-judge confidences $\{p_i\}$;
         validator subclaim states $\{\tau_j\}$; threshold $\tau$
\ENSURE  authorized commitment $\tilde{y}\in\mathcal{Y}\cup\{\textsc{No-Commit}\}$
\STATE $\tilde{y}\gets\hat{y}$
\IF{$\tilde{y}\in\{\textsc{S},\textsc{R}\}$ \AND $\mathrm{material\_mixed}(\{\tau_j\})$}
  \STATE $\tilde{y}\gets\textsc{Conflicting}$ \quad{\small // Stage 1: structural veto}
\ENDIF
\IF{$\tilde{y}\in\{\textsc{S},\textsc{R}\}$ \AND $\bar{p}<\tau$}
  \STATE \RETURN \textsc{No-Commit} \quad{\small // Stage 2: confidence gate}
\ENDIF
\RETURN $\tilde{y}$
\end{algorithmic}
\end{algorithm}

\begin{lemma}[Monotonic veto]\label{lem:monotonic}
For any fixed $\tau$, the directional commits of $F$ are a subset of
those of $E$: $\{c : F(c;\tau)\in\{\textsc{S},\textsc{R}\}\} \subseteq
\{c : E(c;\tau)\in\{\textsc{S},\textsc{R}\}\}$.
\end{lemma}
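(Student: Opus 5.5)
The argument is a direct case analysis on Algorithm~\ref{alg:vgrctc}, taking a claim $c$ and showing that $F(c;\tau)\in\{\textsc{S},\textsc{R}\}$ forces $E(c;\tau)\in\{\textsc{S},\textsc{R}\}$. The key structural fact is that both probes read the same panel proposal $\hat{y}$ and the same mean confidence $\bar{p}$. Neither quantity depends on the validator, and the validator is run independently of the panel. The probes therefore differ only in the Stage~1 veto. That veto can map a directional label to \textsc{Conflicting}, but it never maps anything \emph{into} $\{\textsc{S},\textsc{R}\}$.

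First, I write out the output of each probe explicitly as a function of $\hat{y}$, $\bar{p}$ and $m:=\mathrm{material\_mixed}(\{\tau_j\})$.

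For $E$, the rules are:
\begin{itemize}
\item if $\hat{y}\notin\{\textsc{S},\textsc{R}\}$, then $E(c;\tau)=\hat{y}$;
\item if $\hat{y}\in\{\textsc{S},\textsc{R}\}$ and $\bar{p}\ge\tau$, then $E(c;\tau)=\hat{y}$;
\item otherwise, $E(c;\tau)=\textsc{No-Commit}$.
\end{itemize}

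For $F$, the rules are:
\begin{itemize}
\item if $\hat{y}\in\{\textsc{S},\textsc{R}\}$ and $m$ holds, then $F(c;\tau)=\textsc{Conflicting}$;
\item otherwise $\tilde{y}=\hat{y}$ after Stage~1, and $F(c;\tau)$ coincides with $E(c;\tau)$.
\end{itemize}

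Second, I suppose $F(c;\tau)\in\{\textsc{S},\textsc{R}\}$. The Stage~2 gate returns either \textsc{No-Commit} or the unchanged $\tilde{y}$. Hence the returned value must be $\tilde{y}$ itself, so $\tilde{y}\in\{\textsc{S},\textsc{R}\}$ and $\bar{p}\ge\tau$. Stage~1 only ever writes \textsc{Conflicting}, so a directional $\tilde{y}$ means Stage~1 did not fire and $\tilde{y}=\hat{y}$. Therefore $\hat{y}\in\{\textsc{S},\textsc{R}\}$ with $\bar{p}\ge\tau$, and by the description of $E$ above, $E(c;\tau)=\hat{y}\in\{\textsc{S},\textsc{R}\}$. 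In fact this shows the stronger statement that $F$ and $E$ commit to the \emph{same} direction whenever $F$ commits. Since $c$ was arbitrary, set inclusion follows.

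The only real obstacle is not mathematical but definitional. The argument relies on $E$ and $F$ sharing an identical $\hat{y}$ and $\bar{p}$ for each $c$, and on the threshold comparison being the same ($\bar{p}<\tau$ rejects) in both probes. I would state this explicitly as the standing assumption that both probes are applied to the same frozen panel outputs. That assumption holds by the construction in \S\ref{sec:method:controllers}, where Probe~F's Stage~2 is defined as ``the confidence threshold as in $E$''. Once this is fixed, the remaining steps are immediate.
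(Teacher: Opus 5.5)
Your proposal is correct and follows the same route as the paper's proof sketch: Stage~2 is the same confidence gate in both probes, and Stage~1 can only move a directional proposal to \textsc{Conflicting}, never into $\{\textsc{S},\textsc{R}\}$, so every directional output of $F$ is already a directional output of $E$. Your case analysis spells that sketch out. It also yields the slightly stronger same-direction conclusion, and it does not depend on the sketch's ``strict subset'' wording, which is unnecessary for the lemma.
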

\noindent\emph{Proof sketch.} Stage~2 is identical in $E$ and $F$. $F$
adds Stage~1, which downgrades a strict subset of $E$'s directional
proposals before Stage~2 runs. \rule{1ex}{1ex}

\noindent
By Lemma~\ref{lem:monotonic}, $F$ can never widen the directional
commit set at fixed $\tau$, so any improvement on
$\mathrm{Acc}_{\mathrm{S/R}}$ or $\mathrm{Rec}_{\mathrm{C}}$ must come
from \emph{selectivity} (Stage~1 blocks the right cases), not from
larger coverage.

\paragraph{Implementation.}
Panel: Claude Haiku~4.5 + Claude Sonnet~4.5 + GPT-4o-mini, prompted with
a single shared typed-verdict template. Validator: Claude Haiku~4.5 with
a few-shot certificate prompt; we audit prompt sensitivity in
\S\ref{sec:ablation}. Decoding is greedy per judge; the validator
decomposition is sampled once per claim. Both prompts are reproduced in
full in Appendix~\ref{sec:appendix:prompts}.

\paragraph{The verdict prompt maximally favours \textsc{Conflicting}.}
This is the single most important qualification on every rate in this
paper, and it runs in the paper's favour. The shared template does not
merely list \textsc{Conflicting} as an option. It enumerates four
explicit trigger conditions for it, including cherry-picking by
selective truth and true-core-with-misleading-framing; it carries an
imperative instruction not to collapse a conflicting case to
\textsc{Refutes} even when the misleading aspect dominates; and it
supplies three worked demonstrations, each of which resolves to
\textsc{Conflicting}. A judge that followed the template literally
would over-produce \textsc{Conflicting}, and indeed all judges return
false \textsc{Conflicting} on some pure-evidence claims
(\S\ref{sec:results}).

Every directional-overcommitment rate we report is therefore a
\emph{conservative lower bound} on what a neutrally specified contract
would produce. This is a measured claim rather than an assumption. Under
a neutral specification that names the four labels and says nothing more,
CPO rises for all four contemporary judges, to
$0.593$--$0.644$, with paired intervals excluding zero in every case, while false \textsc{Conflicting} on one-sided evidence
\emph{falls} rather than rises (\S\ref{sec:ablation},
Table~\ref{tab:prompt_ladder}). The instruction is doing substantial work,
and it is not doing it by making the judge label everything conflicting.
The finding is therefore not that models fail to use a label they were
merely offered; it is that they commit directionally at the rates reported
here \emph{after} being told, with worked examples, to prefer the
non-directional verdict.

\paragraph{Retry policy and failure accounting.}
Each request is issued once with at most one retry, triggered only by an
HTTP or transport error and never by an unwanted verdict or an
unparseable response, so retrying cannot select on outcome. Across the
1{,}140 requests of the contemporary AVeriTeC matrix, 21 were retried
after a transport failure (14 HTTP 503 and 7 HTTP 429, all from the
open-weight router) and all 21 then succeeded. Two Claude Sonnet~5
requests produced no usable verdict: one terminal refusal that returned
no text block (a gold-\textsc{Conflicting} case) and one response whose
body was the string \texttt{None} (a gold-\textsc{Insufficient} case).
Both are retained in their fixed denominators and counted as
non-directional, so Claude Sonnet~5 contributes 283 parsed verdicts
while still being scored against $N_{\mathrm{C}}{=}150$ and
$N_{\textsc{I}}{=}35$. This convention can only understate its
overcommitment rates.

Greedy decoding does not make commercial endpoints bit-reproducible, and
we did not issue repeated draws per item. Every interval in this paper is
therefore computed over items, not over runs, and does not bound
run-to-run variation.

\paragraph{Contemporary-model replication.}
To test whether the diagnosis survives model turnover, we freeze four
individual-judge configurations before inspecting their outputs:
GPT-5.6 Terra (\texttt{gpt-5.6-terra}), Claude Sonnet~5
(\texttt{claude-sonnet-5}), gpt-oss-120b
(\texttt{openai/gpt-oss-120b}), and Mistral Small~4
(\texttt{mistralai/mistral-small-2603}). The first two are proprietary;
the latter two have open weights. Each model receives the same four-option
template and the same fixed AVeriTeC ($N{=}285$) and VitaminC-Mixed
($N{=}250$) inputs used above, and is scored individually without panel
aggregation or post-hoc selection. We disable or minimize reasoning and use
deterministic decoding controls where each API supports them. The
open-weight endpoints are pinned to Groq and Mistral, respectively, with
provider fallback disabled. Run manifests record exact checkpoint
identifiers, prompt and source hashes, request settings, and returned
provider identity. We retain parse failures in the fixed denominator and
report Wilson $95\%$ intervals for $\mathrm{CPO}_{\mathrm{C}}$.

\paragraph{Code and data.}
The runners, the analysis scripts and the raw model outputs behind every
table and figure in this paper are at
\url{https://github.com/HrxuAlbert/cherry-pick-override}. Releasing the raw
outputs rather than only the computed rates is deliberate: each table can be
re-derived from the model responses rather than read off our summaries. One
thing is withheld. The calibration round reported in
\S\ref{sec:limits:contract} consists of two people's judgements on twenty
claims, and per-reviewer labels would identify them, so the annotations and
the blinding map are not released; the script that computes the agreement
statistics is, so those numbers can be checked against the code that produced
them.

\section{Diagnostic Analyses of Directional Overcommitment}\label{sec:results}

We organize the design space of common CPO mitigations as an
\emph{intervention ladder} and ask, rung by rung, what each fix
recovers and what residual failure remains. The ladder is a
design-space map, not a leaderboard. Rungs L0--L4 are described without
reference to the two-channel probe, which enters only at L5 to make the
separation principle operational. Table~\ref{tab:ladder_residuals}
summarizes the residual-failure pattern; the table cells are the
sentences each rung's subsection unpacks.

\begin{table}[t]
\centering
\footnotesize
\setlength{\tabcolsep}{3pt}
\caption{\textbf{Residual-failure ladder.} Each tested intervention leaves a
residual; the ladder is diagnostic, not competitive. Design
implications are unpacked in each rung's subsection.}
\label{tab:ladder_residuals}
\begin{tabular}{lp{0.32\columnwidth}p{0.45\columnwidth}}
\toprule
Rung & Added channel & Residual failure \\
\midrule
L0$^{\dagger}$ & 3-option judge                       & by construction directional on \textsc{Conflicting} gold ($>$84\%); its third option means \emph{absence}, not \emph{conflict} \\
\addlinespace[4pt]
L1 & typed \textsc{Conflicting}              & first in-contract rung; structurally avoidable residual remains \\
\addlinespace[4pt]
L1.5 & stronger \textsc{Conflicting} specification & removes 23--57 unauthorized commitments per judge relative to a neutral wording, while producing fewer false \textsc{Conflicting}; a large residual survives the instruction \\
\addlinespace[4pt]
L2 & panel aggregation                  & raises direction-on-conflict in the pre-contract AVeriTeC baseline; typed majority can suppress conflict dissent \\
\addlinespace[4pt]
L3 & confidence threshold                 & calibrated for direction (ECE $0.07$) but cannot operationally separate CPO from correct commits at tested $\tau$ \\
\addlinespace[4pt]
L4 & validator-only filter                & classifier collapses $\mathrm{Acc}_{\mathrm{S/R}}$ ($0.78{\to}0.39$); veto raises $\mathrm{SE}$ over L3 \\
\addlinespace[4pt]
L4.5 & self-decomposition before verdict & lowers CPO for 3 of 4 judges but recovers only about half of the overcommitted claims; on the strongest judge most of the gain is bought by false \textsc{Conflicting} \\
\addlinespace[4pt]
L5 & two-channel reference probe          & magnitude $\Delta$ over L3 modest (CI straddles 0); residual subclaim-mismatch failures \\
\bottomrule
\end{tabular}\\[2pt]
\footnotesize $^{\dagger}$L0 is a \emph{pre-contract} baseline: a 3-option schema does not expose \textsc{Conflicting} and therefore cannot, strictly, measure CPO under our contract (Table~\ref{tab:scope_boundary}). We report L0 to show that without the in-contract vocabulary the failure manifests as a near-mechanical directional rate on gold-\textsc{Conflicting}; the diagnostic ladder begins in-contract at L1.
\end{table}

\subsection{Experimental setup}
\label{sec:results:setup}
We evaluate on two fact-verification substrates. \textbf{AVeriTeC} is a
stratified sample of $N=285$ claims from the AVeriTeC shared
task~\citep{schlichtkrull2023averitec}, whose Conflicting/Cherrypicking
class ($N=150$) supplies naturally occurring mixed evidence.
\textbf{VitaminC-Mixed} is a constructed stress test of $N=250$ claims
built from VitaminC contrastive pairs~\citep{schuster2021vitaminc}, where
mixed evidence is \emph{synthesized} by concatenating supporting and
refuting sources. Because those pairs are Wikipedia revisions taken
before and after an edit, some of the resulting conflicts are plausibly
temporal updates that a time-scoped claim would license resolving; we
therefore report VitaminC-Mixed as behaviour under a constructed
condition rather than as a second prevalence estimate
(\S\ref{sec:limits:data}). The typed panel is Claude Haiku~4.5, Claude
Sonnet~4.5, and GPT-4o-mini; the validator is Claude Haiku~4.5 with the
certificate prompt. All metrics share a single denominator
(Section~\ref{sec:method:protocol}); controllers are compared only at
matched coverage; the paired bootstrap uses 5000 resamples and the
random-veto control uses 2000 seeds. Table~\ref{tab:main_results}
reports the ladder on AVeriTeC; Figure~\ref{fig:random_veto} visualises
the structural-selectivity property of the L5 controller's veto under a
matched-coverage random-veto control.

\paragraph{Contemporary individual-judge replication.}
Table~\ref{tab:current_models} reports the separately frozen replication.
All four models make dataset-defined CPO errors on both substrates. On
AVeriTeC, $\mathrm{CPO}_{\mathrm{C}}$ ranges from $0.213$ to $0.480$; on
VitaminC-Mixed it ranges from $0.240$ to $0.380$. Every corresponding
Wilson interval has a lower bound above zero. These observations show that
the failure is not confined to the historical panel, but they do not imply
that its prevalence has been human-confirmed.

\begin{table*}[t]
\centering
\footnotesize
\setlength{\tabcolsep}{3pt}
\caption{\textbf{Contemporary individual-judge replication.} Overcommitment
is scored against each dataset's four-way labels. Parenthesized ranges are
Wilson $95\%$ intervals. $\mathrm{Ovc}_{\textsc{I}}$ is the directional
rate on gold-\textsc{Insufficient} cases, the control for whether the
failure is specific to conflict. Substrates are stacked as row blocks rather than
placed side by side: at ten columns the table did not fit the text
width. All
VitaminC-Mixed outputs and all AVeriTeC outputs except two from Claude
Sonnet~5 parsed successfully; parse failures remain in the fixed
denominator and are counted as non-directional.}
\label{tab:current_models}
\begin{tabular}{llcccc}
\toprule
Model & Access & $\mathrm{CPO}_{\mathrm{C}}\downarrow$ & $\mathrm{Ovc}_{\textsc{I}}\downarrow$ & $\mathrm{Rec}_{\mathrm{C}}\uparrow$ & $\mathrm{Acc}_{\mathrm{S/R}}\uparrow$ \\
\midrule
\multicolumn{6}{l}{\emph{AVeriTeC} ($N_{\mathrm{C}}{=}150$, $N_{\textsc{I}}{=}35$, $N_{\mathrm{S/R}}{=}100$)} \\
GPT-5.6 Terra   & proprietary & $0.293\;(0.226,0.371)$ & $0.086\;(0.030,0.224)$ & 0.613 & 0.800 \\
Claude Sonnet~5 & proprietary & $0.300\;(0.232,0.378)$ & $0.200\;(0.100,0.359)$ & 0.633 & 0.870 \\
gpt-oss-120b    & open weight & $0.480\;(0.402,0.559)$ & $0.286\;(0.163,0.451)$ & 0.373 & 0.790 \\
Mistral Small~4 & open weight & $0.213\;(0.155,0.286)$ & $0.200\;(0.100,0.359)$ & 0.720 & 0.680 \\
\midrule
\multicolumn{6}{l}{\emph{VitaminC-Mixed} ($N_{\mathrm{C}}{=}N_{\mathrm{S/R}}{=}100$, $N_{\textsc{I}}{=}50$)} \\
GPT-5.6 Terra   & proprietary & $0.380\;(0.291,0.478)$ & $0.380\;(0.259,0.518)$ & 0.570 & 0.720 \\
Claude Sonnet~5 & proprietary & $0.380\;(0.291,0.478)$ & $0.420\;(0.294,0.558)$ & 0.610 & 0.770 \\
gpt-oss-120b    & open weight & $0.240\;(0.167,0.332)$ & $0.440\;(0.312,0.577)$ & 0.750 & 0.790 \\
Mistral Small~4 & open weight & $0.360\;(0.273,0.458)$ & $0.480\;(0.348,0.615)$ & 0.620 & 0.740 \\
\bottomrule
\end{tabular}
\end{table*}

\paragraph{The failure is not specific to conflict.}
The $\mathrm{Ovc}_{\textsc{I}}$ column tests whether directional
overcommitment is a response to \emph{conflicting} evidence in
particular or to the presence of \emph{any} non-directional verdict in
the contract. It is not conflict-specific. On AVeriTeC the gap
$\mathrm{CPO}_{\mathrm{C}}-\mathrm{Ovc}_{\textsc{I}}$ is positive for
all four judges but ranges from $+0.013$ to $+0.208$, and the
$N_{\textsc{I}}{=}35$ intervals are too wide to separate the two rates
for Claude Sonnet~5 or Mistral Small~4. On VitaminC-Mixed the gap
reverses: three of the four judges overcommit \emph{more} on
gold-\textsc{Insufficient} than on gold-\textsc{Conflicting}
($-0.040$, $-0.200$ and $-0.120$; GPT-5.6 Terra is exactly $0.000$).
The historical panel behaves the same way, at $0.187$ on
gold-\textsc{Conflicting} against $0.171$ on gold-\textsc{Insufficient}
(\S\ref{sec:method:not}).

The two subsets are disjoint sets of claims, so this is an unpaired
comparison of proportions and not the per-case paired contrast used
elsewhere in this section; with $N_{\textsc{I}}$ of 35 and 50 it is also
the least precise quantity we report. We therefore draw only the
conclusion the design supports: the evidence does not license a
conflict-specific reading, and any account of the failure must also
explain the comparable rate at which these judges decline
\textsc{Insufficient}.

We read this as generalizing the diagnosis rather than undermining it.
What the judges are doing is not reacting to conflict; it is declining
the non-directional half of their own output vocabulary, whichever half
the contract offers. The distinction that survives is the one drawn in
\S\ref{sec:method:cco}, and it is about remedy rather than trigger: the
insufficiency instance is a failure under missing evidence and is
therefore addressable by retrieving more, while CPO occurs with both
strands already in context and is not. A system that responds to the
aggregate rate by strengthening retrieval would move the
$\mathrm{Ovc}_{\textsc{I}}$ column and leave the CPO column where it is.

The ordering is not stable across substrates: gpt-oss-120b has the highest
AVeriTeC CPO rate but the lowest VitaminC-Mixed rate, whereas Mistral
Small~4 moves from the lowest to the second-highest. Low CPO also need not
mean uniformly better verification: on AVeriTeC, Mistral Small~4's lower
CPO coincides with the lowest pure-S/R accuracy ($0.680$) and the highest
false-conflict rate on pure S/R cases ($0.180$). We therefore report
individual models rather than pooling by access regime, and draw no general
proprietary-versus-open-weight conclusion from two models per group. These
runs diagnose benchmark-contract behaviour only; an independent human review,
which we designed and did not complete
(Section~\ref{sec:limits:contract}), is required before interpreting the
rates as semantically confirmed overcommitment.

\begin{table*}[t]
\centering
\footnotesize
\setlength{\tabcolsep}{4pt}
\caption{Intervention ladder on AVeriTeC mixed-evidence claims
($N{=}285$, stratified; $N_{\mathrm{S/R}}{=}100$,
$N_{\mathrm{C}}{=}150$). Metrics defined in
Section~\ref{sec:method:protocol}; all share denominators across systems.
$\mathrm{CPO}_N$ is the same-denominator rate over $N$ (used for
cross-system comparison); $\mathrm{CPO}_{\mathrm{C}}$ is the conditional
rate on the gold-\textsc{Conflicting} subset. Paired-bootstrap CIs for
the key differences are reported in the paragraph below the table.}
\label{tab:main_results}
\begin{tabular}{lrrrrrr}
\toprule
System & $\mathrm{Cov}\,\uparrow$ & $\mathrm{SE}\,\downarrow$ & $\mathrm{CPO}_N\,\downarrow$ & $\mathrm{CPO}_{\mathrm{C}}\,\downarrow$ & $\mathrm{Acc}_{\mathrm{S/R}}\,\uparrow$ & $\mathrm{Rec}_{\mathrm{C}}\,\uparrow$ \\
\midrule
\multicolumn{7}{l}{\emph{L0: three-option direct judge (pre-contract; reported for illustration)$^{\dagger}$}} \\
Haiku 3-opt          & 0.800 & 0.632 & (0.442) & (0.840) & 0.840 & 0.000 \\
Panel 3-opt (majority) & 0.846 & 0.606 & (0.467) & (\textbf{0.887}) & 0.950 & 0.000 \\
\multicolumn{7}{l}{\emph{L1: typed four-option judge}} \\
Panel + typed (majority) & 0.396 & 0.310 & 0.098 & 0.187 & 0.780 & 0.773 \\
\multicolumn{7}{l}{\emph{L2: alternative panel aggregation}} \\
Conflict-if-any panel & 0.305 & 0.230 & 0.053 & 0.100 & 0.670 & 0.893 \\
\multicolumn{7}{l}{\emph{L3: confidence-threshold selective}} \\
E $\tau$=0.90        & 0.288 & 0.207 & 0.056 & 0.107 & 0.650 & 0.773 \\
\multicolumn{7}{l}{\emph{L4: validator-only filtering}} \\
Validator-veto       & 0.295 & 0.226 & 0.053 & 0.100 & 0.650 & 0.860 \\
\multicolumn{7}{l}{\emph{L5: two-channel reference probe}} \\
F $\tau$=0.85        & 0.281 & 0.200 & 0.046 & 0.087 & 0.640 & 0.860 \\
F $\tau$=0.90        & 0.232 & 0.152 & 0.032 & 0.060 & 0.560 & 0.860 \\
\bottomrule
\end{tabular}\\[2pt]
\footnotesize $^{\dagger}$Parentheses on L0's $\mathrm{CPO}_N$ and $\mathrm{CPO}_{\mathrm{C}}$ entries indicate descriptive reporting only: the 3-option schema does not expose \textsc{Conflicting} as a verdict, so these cells fall outside our contract for CPO measurement (Table~\ref{tab:scope_boundary}); they quantify the precondition failure that motivates L1.
\end{table*}

\paragraph{Cross-dataset replication on VitaminC-Mixed.} The full
per-system VitaminC-Mixed table mirroring Table~\ref{tab:main_results} is
provided in Appendix~\ref{sec:appendix:vitaminc}
(Table~\ref{tab:vitaminc_results}). Two findings transfer: (a) the L0
panel-amplification effect does \emph{not} replicate
($\Delta\,\mathrm{CPO}_{\mathrm{C}}=0.000$, $95\%$ CI $[-0.060,+0.060]$;
$N_{\mathrm{C}}=100$), so we scope the amplification claim to AVeriTeC;
(b) the L3$\to$L5 operating-point pattern at $\tau{=}0.90$ matches the
AVeriTeC main operating point (L5 $\mathrm{SE}=0.388$ vs L3
$0.416$; $\mathrm{CPO}_N=0.084$ vs $0.104$) and ties in the extreme
low-coverage regime at $\tau{=}0.95$.

\paragraph{Paired-bootstrap CIs.}
On per-case paired differences ($5000$ resamples, seed $0$):
\emph{(i)} L0 panel amplification on AVeriTeC ($N_{\mathrm{C}}=150$),
$\Delta\,\mathrm{CPO}_{\mathrm{C}}=+0.047$, CI $[+0.013,+0.080]$,
separated; \emph{(ii)} L0$\to$L1 vocabulary fix,
$\Delta\,\mathrm{CPO}_{\mathrm{C}}=-0.653$, CI $[-0.733,-0.573]$,
separated; \emph{(iii)} validator-as-classifier collapse
($N_{\mathrm{S/R}}=100$), $\Delta\,\mathrm{Acc}_{\mathrm{S/R}}=-0.390$,
CI $[-0.490,-0.290]$, separated; \emph{(iv)} L5 vs L3 at matched
coverage $\approx0.28$, $\Delta\,\mathrm{SE}=-0.007$, CI
$[-0.084,+0.066]$ and $\Delta\,\mathrm{CPO}_N=-0.010$, CI
$[-0.035,+0.011]$; both straddle $0$. The robust L3$\to$L5 finding is
structural selectivity (Figure~\ref{fig:random_veto}), not aggregate
magnitude.

\begin{figure*}[t]
\centering
\includegraphics[width=\textwidth]{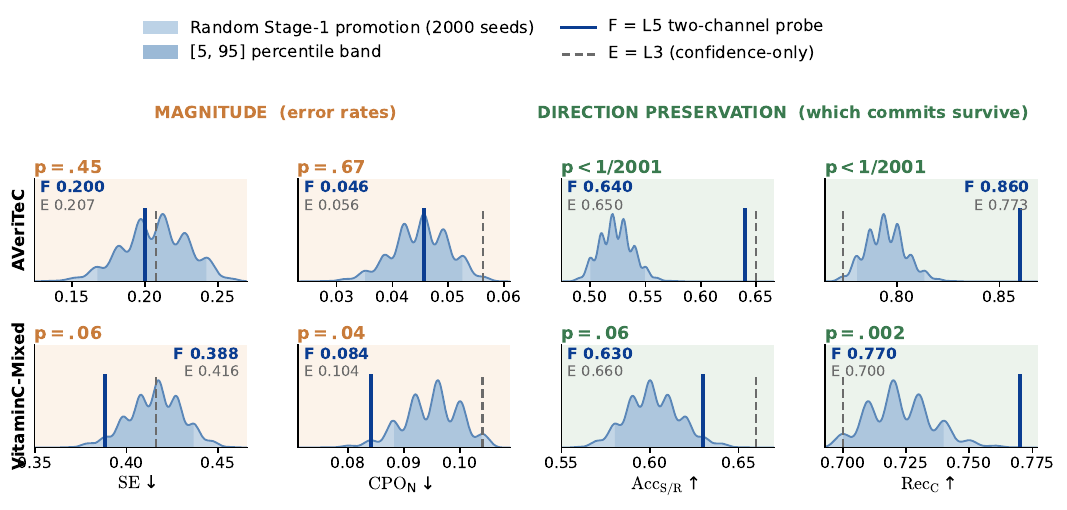}
\caption{\textbf{Matched random-promotion control.} Rows are the two
substrates, AVeriTeC ($N{=}285$) and VitaminC-Mixed ($N{=}250$). Stage~1
promotes $k$ of L3's directional commits to \textsc{Conflicting}
($k{=}16$ on AVeriTeC, $k{=}10$ on VitaminC-Mixed); the null promotes
$2000$ random size-$k$ subsets. The shaded region covers the 5th--95th
percentiles of the null. Each panel reports $p$, the fraction of those
$2000$ draws that match or beat the probe on that measure, so a small $p$
means the probe's choice of which commits to promote was not
random. Columns are the error-rate measures $\mathrm{SE}$ (selective
error) and $\mathrm{CPO}_N$ (same-denominator CPO rate), then the
direction-preservation measures $\mathrm{Acc}_{\mathrm{S/R}}$ (pure-S/R
accuracy) and $\mathrm{Rec}_{\mathrm{C}}$ (conflict recall); the arrow
on each axis gives the preferred direction. On AVeriTeC the error-rate
measures are indistinguishable from the null while both
direction-preservation measures reach $p<1/2001$; on VitaminC-Mixed only
$\mathrm{Rec}_{\mathrm{C}}$ stays clearly separated and the rest sit
between $0.04$ and $0.07$. The two direction-preservation measures are
algebraically coupled under fixed $k$ and represent one selection event,
not independent tests.}
\label{fig:random_veto}
\end{figure*}

\subsection{L0\,$\to$\,L1: the in-contract starting point}
\label{sec:results:l0l1}
L0 is a \emph{pre-contract} baseline: a three-option judge
(\textsc{S}/\textsc{R}/NEI) cannot, by construction, return
\textsc{Conflicting}, so any \textsc{Conflicting} gold case it sees
becomes a forced directional commitment (rate $>$$0.84$ on AVeriTeC).
We report L0 only to motivate L1 as the first \emph{in-contract} rung of
the ladder. Because our contract scopes CPO to schemas that expose
\textsc{Conflicting} (Table~\ref{tab:scope_boundary}), we do not score L0
as a CPO controller. The
single largest reduction in the ladder occurs at this contract entry
point (L1 $\mathrm{CPO}_N=0.098$ on AVeriTeC; $0.120$ on
VitaminC-Mixed). A structurally avoidable residual remains, which the
remaining rungs L2--L5 are designed to interrogate.

\subsection{L2: panel voting (a scoped negative finding)}
\label{sec:results:l2}
Under the three-option schema on AVeriTeC, three-judge majority voting
amplifies the direction-on-conflict rate (panel $0.887$ vs single
$0.840$, $+0.047$, CI $[+0.013,+0.080]$); on VitaminC-Mixed the effect
is absent (CI $[-0.060,+0.060]$). To explain the asymmetry, we inspect
per-judge vote distributions on the gold-\textsc{Conflicting} subsets
(Table~\ref{tab:amplification_anatomy}). Under 3-opt roughly two-thirds
of conflicting cases have \emph{all three} judges voting directionally on
both datasets (AVeriTeC 71\%, VitaminC 66\%); amplification is
dominated by shared directional bias rather than by the majority
overriding minority dissent. Under the in-contract 4-opt typed schema,
we then inspect the panel agreement structure
within the $N{=}27$ AVeriTeC CPO commits themselves\footnote{One of
the $28$ CPO commits used in the L3 confidence analysis below had a
per-judge vote count $\neq 3$ (a parse failure on one judge) and is
omitted here, since the 3-0/2-1 anatomy requires a complete panel.} and find that the
mechanism is split nearly evenly between two failure modes: $52\%$
($14/27$, Wilson $[0.34,0.69]$) are \emph{unanimous on the committed
direction} (shared bias across judges) and $48\%$ ($13/27$,
$[0.31,0.66]$) are
\emph{2-vs-1 with the dissenting judge voting non-directionally}
(majority suppression of a single judge who flagged conflict). Almost
half of all CPO is therefore an aggregation artifact in which a single
judge produced the conflict-preserving verdict and was outvoted, indicating that
typed-vocabulary alone is insufficient when paired with majority
aggregation. An alternative aggregation
rule that commits only when no judge votes \textsc{Conflicting} raises
$\mathrm{Rec}_{\mathrm{C}}$ to $0.893$ and lowers $\mathrm{CPO}_N$ to
$0.053$, at the cost of $\mathrm{Acc}_{\mathrm{S/R}}$ ($0.670$ vs typed
direct $0.780$) and a higher $\mathrm{SE}$ than L3
(Table~\ref{tab:main_results}). Aggregation choices expose different
conflict-recall--coverage trade-offs rather than producing a single dominant
intervention.

\subsection{L3: confidence-threshold selection}
\label{sec:results:l3}
Committing only when $\bar{p}\ge\tau$ traces a clean risk--coverage
curve (E $\tau{=}0.90$: $\mathrm{Cov}\,0.288$, $\mathrm{SE}\,0.207$,
$\mathrm{CPO}_N\,0.056$). The relevant question for our diagnosis is
not whether the curve exists but whether it can disentangle CPO from
correct directional commits. It cannot, in any operationally useful
sense. The panel is in fact well-calibrated for directional
prediction on the pure-S/R subset (ECE $=0.07$ on $N_{\mathrm{S/R}}{=}100$),
so CPO is not the residual of pure-S/R miscalibration. CPO commits
($n{=}28$) have mean $\bar{p}{=}0.902$ versus $0.933$ for correct
directional commits ($n{=}78$ on $N_{\mathrm{S/R}}{=}100$); the two confidence distributions are
statistically distinguishable (Mann--Whitney two-sided $p\!\approx\!0.0002$)
but the mean separation is $0.031$, far below what a usable threshold
needs: $86\%$ of CPO commits sit at $\bar{p}\ge0.85$ ($24/28$, Wilson
$[0.69,0.94]$), $57\%$ at $\bar{p}\ge0.90$ ($16/28$, $[0.39,0.74]$),
and $7\%$ at $\bar{p}\ge0.95$ ($2/28$, $[0.02,0.23]$). No threshold
within the tested range filters CPO without also filtering correct
directional commits, because confidence measures decision strength
rather than evidence structure. The face-mask example of
Section~\ref{sec:intro} (panel \textsc{Refutes} at $\bar{p}{=}0.96$) is
exactly this high-confidence pass-through.

\subsection{L4: validator-only filtering}
\label{sec:results:l4}
If the gap left by L3 is insensitivity to structural conflict, the
natural fix is a deterministic evidence-state validator. As a
\emph{primary classifier} it collapses pure-S/R accuracy from $0.78$
to $0.39$ on AVeriTeC: the deterministic rule cannot weight subclaim
importance against the main directional question. As a \emph{veto
channel} on top of typed proposals it preserves $\mathrm{Acc}_{\mathrm{S/R}}$
($0.650$) and reaches $\mathrm{CPO}_N\,0.053$, but its $\mathrm{SE}$
($0.226$) is higher than L3's ($0.207$). The validator catches
structural conflicts L3 misses while also firing on cases the judge
settles correctly.

\subsection{L5: two-channel external authorization}
\label{sec:results:l5}
L3 and L4 leave non-substitutable residuals: one cannot see structural
conflict, while the other is too brittle alone. The top rung uses each
channel for the role it is good at: the validator as a structural-evidence
veto, confidence as a decision-strength gate. At matched coverage
${\approx}\,0.28$ on AVeriTeC, $F$ $\tau$=0.85 reaches
$\mathrm{SE}\,0.200$ and $\mathrm{CPO}_N\,0.046$ vs L3's $0.207$ and
$0.056$; magnitude improvements are modest and their bootstrap CIs
straddle zero (above). The robust finding is \emph{structural
selectivity}: under the apples-to-apples random Stage-1 null, no random
sample matches L5 on $\mathrm{Acc}_{\mathrm{S/R}}$ (empirical
$p<1/2001$). Because each of the $k$ promotions either decrements
$\mathrm{Acc}_{\mathrm{S/R}}$ (if the promoted case has gold S/R) or
increments $\mathrm{Rec}_{\mathrm{C}}$ (if the gold is
\textsc{Conflicting} and the panel committed directionally), the two
metrics are algebraically coupled under fixed $k$, and we therefore
treat the joint result on both axes as a single selection event
expressed in two ways rather than as two independent tests. The veto
selects \emph{which} cases to block in a structurally non-arbitrary way;
the null does not certify magnitude dominance, only non-arbitrariness. By
Lemma~\ref{lem:monotonic}, $F$ is at-least-as-restrictive as $E$ at
any fixed $\tau$, so selectivity, not extra coverage, is the only
available source of improvement. The same operating-point pattern
replicates on VitaminC-Mixed at $\tau$=0.90 (L5
$\mathrm{SE}\,0.388$, $\mathrm{CPO}_N\,0.084$ vs L3's $0.416$,
$0.104$), and the random-veto null again gives empirical
$p<1/2001$; at $\tau$=0.95 L5 and L3 are tied at near-zero
$\mathrm{CPO}_N\,(0.024)$. Taken together, no single channel L0--L4
simultaneously preserves pure-S/R accuracy and conflict recall at the
matched operating points; L5 is the minimal honest top-rung instance,
exhibiting structural selectivity but not magnitude dominance.



\section{Is It Recognition or Commitment?}\label{sec:recognition}

\begin{figure*}[t]
\centering
\includegraphics[width=\textwidth]{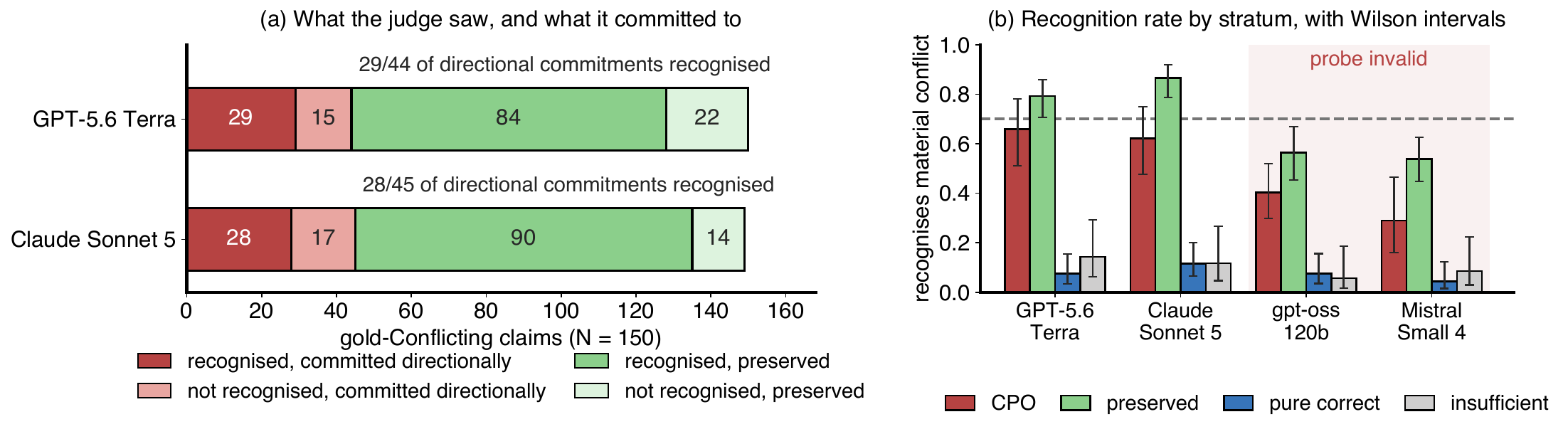}
\caption{\textbf{Recognition and commitment come apart.} \textbf{(a)} The
gold-\textsc{Conflicting} subset ($N_{\mathrm{C}}{=}150$) partitioned by what
the fresh-context probe answered and what the judge actually committed to,
for the two judges whose probe satisfies the validity conditions. The dark red
block is the object of this paper: the judge identified material support and
material refutation, quoting a verbatim span for each, and returned one
direction as its verdict. \textbf{(b)} Probe recognition rate by stratum for
all four judges, with Wilson $95\%$ intervals. The dashed line is the
pre-specified validity condition requiring recognition on at least $0.70$ of
the cases a judge itself returned \textsc{Conflicting} on; gpt-oss-120b and
Mistral Small~4 fall below it, so their readouts are marked invalid (shaded)
and excluded from the claim in \S\ref{sec:recognition:dissociation}. The low
\emph{insufficient} bars show that the probe separates absent evidence from
two-sided evidence for all four judges, including the two whose readouts are
excluded.}
\label{fig:recognition}
\end{figure*}

The results so far establish a rate: judges assert a direction on claims whose
authorized verdict is non-directional. A rate does not say why. Two
explanations are compatible with everything reported above. The judge may
never have represented the evidence as two-sided, in which case the failure is
ordinary evidence misreading and the term \emph{override} is unwarranted. Or
it may have represented the conflict and issued a direction regardless, in
which case the recognition never reached the terminal action. This section
separates the two.

\subsection{A fresh-context recognition probe}
\label{sec:recognition:probe}
For every case in the fixed AVeriTeC sample, each judge is queried a second
time in a context that contains the claim and the same evidence bytes, and
nothing else: no verdict, no confidence, no prior response, no label
vocabulary, and no mention of verification as a task. The question is about
evidence structure alone --- whether the evidence contains material support
\emph{and} material refutation --- and the schema requires a verbatim span for
each direction claimed, so a direction asserted without a quotable span is
recorded as absent. The full probe prompt is in
Appendix~\ref{sec:appendix:prompts}.

Strata are defined by the \emph{parent verdict run}, never by the probe's own
answers, so the probe cannot select its own denominator. Four strata matter
here: the CPO cases (gold \textsc{Conflicting}, verdict directional), the
preserved cases (gold \textsc{Conflicting}, verdict \textsc{Conflicting}), the
pure-evidence cases the judge got right (gold \textsc{Supports} or
\textsc{Refutes}, verdict matching), and the gold-\textsc{Insufficient} cases.
The probe protocol, its estimands, and the validity conditions below were
frozen before the first probe call was issued.

\subsection{Probe validity}
\label{sec:recognition:validity}
A probe of this kind fails in two directions, and both were pre-specified as
disqualifying. It fails if it does not detect conflict where the judge itself
reported conflict, and it fails if it answers yes on one-sided evidence. We
therefore required, per model, that the yes-rate on preserved cases reach
$0.70$, that the yes-rate on correctly committed pure-evidence cases stay
below $0.50$, and that the bootstrap interval for the difference between the
CPO and pure-correct rates exclude zero.

Two of the four judges meet all three conditions and two do not
(Table~\ref{tab:e1_rates}, Figure~\ref{fig:recognition}b). Both failures are on the first condition: on cases
where gpt-oss-120b and Mistral Small~4 themselves returned
\textsc{Conflicting}, the probe recognizes two-sidedness on only $0.564$ and
$0.538$ of cases. For those two models the instrument is not reliably
measuring recognition, and we exclude their readouts from the claim below
rather than reinterpret them. Their rates are shown for completeness and are
descriptive only. This is a limit of the probe, not evidence that these models
lack the dissociation; the question is simply unresolved for them.

\begin{table*}[t]
\centering
\footnotesize
\setlength{\tabcolsep}{4pt}
\caption{Recognition rates by stratum on AVeriTeC, with the pre-specified
validity conditions. $R_{\mathrm{CPO}}$ is the fraction of a judge's own CPO
cases on which it identifies the evidence as materially two-sided when asked
in a fresh context. $\Delta$ is $R_{\mathrm{CPO}}-R_{\mathrm{pure}}$ with a
case-clustered bootstrap interval ($10{,}000$ resamples). The two strata are
disjoint sets of claims, so the comparison is unpaired. Three probe calls did
not return a usable answer for Claude Sonnet~5 and three for Mistral
Small~4, so those judges' strata are one case short of the corresponding
counts in Table~\ref{tab:current_models}; for Mistral Small~4 the missing
case falls in the CPO stratum, which is why its denominator here is $31$
rather than $32$.}
\label{tab:e1_rates}
\begin{tabular}{lccccc}
\toprule
Judge & $R_{\mathrm{CPO}}$ & $R_{\mathrm{pres}}$ & $R_{\mathrm{pure}}$ & $\Delta$ (95\% CI) & Valid \\
\midrule
GPT-5.6 Terra   & $29/44=0.659$ & 0.792 & 0.075 & $+0.584\;[+0.428,+0.730]$ & yes \\
Claude Sonnet~5 & $28/45=0.622$ & 0.865 & 0.116 & $+0.506\;[+0.343,+0.661]$ & yes \\
gpt-oss-120b    & $29/72=0.403$ & \textbf{0.564} & 0.076 & $+0.327\;[+0.197,+0.456]$ & no \\
Mistral Small~4 & $\;\,9/31=0.290$ & \textbf{0.538} & 0.044 & $+0.246\;[+0.083,+0.423]$ & no \\
\bottomrule
\end{tabular}
\end{table*}

\subsection{Recognition and commitment come apart}
\label{sec:recognition:dissociation}
For the two judges where the probe is valid, recognition of the conflict is
present on most of the cases where the judge nonetheless committed to a
direction. GPT-5.6 Terra identifies the evidence as materially two-sided on
$29$ of its $44$ CPO cases ($0.659$); Claude Sonnet~5 does so on $28$ of $45$
($0.622$). On cases where the same judge committed directionally and the
dataset agrees with the direction, the same probe answers yes on $0.075$ and
$0.116$ of cases. The differences are $+0.584$ and $+0.506$, with bootstrap
intervals well clear of zero (Table~\ref{tab:e1_rates}).

Figure~\ref{fig:recognition}a gives the joint distribution on the
gold-\textsc{Conflicting} subset. Every one of the $29$ and $28$ cases in the
recognized-and-committed cell carried verbatim spans from both directions, so
in each of those cases the judge could quote the supporting evidence and the
refuting evidence and still returned one of them as its verdict.

We read this as a dissociation between recognition and commitment for these
two judges: the information needed to withhold the direction was available to
the model, elicitable from it, and did not change the terminal output. This
is what licenses describing the failure as an override rather than as a
misreading, and the scope of that description is exactly these two judges on
this substrate.

One qualification bounds the claim and cannot be removed by more data of this
kind. The probe measures \emph{elicited} recognition, under a question that
directs attention to evidence structure. It does not establish that the same
representation was active while the verdict was being generated. What the
probe shows is that the capacity was present and cheap to surface, not that it
was consulted.

\paragraph{The probe separates the two non-directional states that the
verdicts do not.}
The recognition rate on gold-\textsc{Insufficient} cases is low for every
judge --- $0.143$, $0.118$, $0.057$ and $0.086$ --- so the probe distinguishes
absent evidence from two-sided evidence cleanly, and does so for the two
models whose probe fails the preserved-case condition as well. Set beside the
overcommitment rates in \S\ref{sec:results}, where directional commitment on
gold-\textsc{Insufficient} cases runs at rates comparable to CPO, this
sharpens the reading of that result. The judges are not confusing the two
evidence states. They distinguish them, and commit directionally on both
anyway.

\subsection{The judge overrides evidence it has itself written down}
\label{sec:recognition:counterevidence}
The probe above asks the judge a question it was not otherwise asked. A stricter
test uses only what the judge produced on its own. In the two-stage condition of
\S\ref{sec:ablation}, the judge first lists the evidence items supporting the
claim and those refuting it, and then issues a verdict with that list in front of
it. Where it commits directionally on a gold-\textsc{Conflicting} claim, we can
ask whether its own list already contained items pointing the other way.

It usually did. Of the directional commitments GPT-5.6 Terra made in that
condition, $18/27$ ($0.667$) were made on claims for which it had itself
enumerated opposing items; for Claude Sonnet~5, $23/33$ ($0.697$). The two
open-weight judges are lower, at $27/58$ ($0.466$) and $13/37$ ($0.351$). The
median number of opposing items enumerated and then overridden is two in every
case, so these are not single stray mentions.

This measure needs no second query and no claim about anything the judge did not
write. It also converges with the probe: for the two judges whose probe is
valid, the fresh-context recognition rate and the self-enumeration rate differ by
$0.008$ and $0.075$, although the two instruments share neither a context nor a
question format. Agreement of that closeness between a probe the judge did not
expect and a list the judge volunteered is the strongest support we have that
$R_{\mathrm{CPO}}$ is measuring the intended quantity.

The two measures are not interchangeable. Self-enumeration is observed under an
intervention that asks for a decomposition, so it describes the judge's behaviour
in that condition rather than in the unmodified pipeline, and it counts an item
as opposing if the judge filed it that way rather than by an independent
standard. What it establishes is narrower and harder to dispute than the probe:
on most of these claims the judge wrote down the counterevidence, in its own
words, in the same exchange in which it ruled against it.

\subsection{What a token-level probe could not measure}
\label{sec:recognition:logprob}
A natural objection to the paragraphs above is that a second query is
unnecessary: the verdict token's own probability distribution should reveal
whether \textsc{Conflicting} was competitive at the moment of commitment. We
tested this and report it as a negative methodological result, because the
measurement does not survive contact with how these models emit verdicts.

The first attempt constrained the judge to answer with a single letter, so
that the first emitted token would be the verdict. At that position the
competing candidates are not other verdicts but surface forms: after the
emitted letter, the top alternatives were punctuation and continuation
fragments, while the other three verdict letters sat between $-24$ and $-30$
in log probability. The emitted letter was also not always the first token.

The second attempt removed that problem by reverting to the unmodified
four-option verdict prompt and locating the token that begins the verdict
value inside the model's own required JSON, where the competing candidates
are the verdict words themselves. The instrument then works as intended and
is uninformative for a structural reason. Across $279$ parsed responses the
probability on the emitted verdict has median $1.000$ and minimum $0.99998$;
on all $57$ CPO cases in that run, \textsc{Conflicting} does not appear among
the twenty most likely tokens at all; and \textsc{Conflicting} ranks second in
one case out of $279$. The verdict field follows a reasoning field in the
required schema, so by the time the model writes the verdict it has already
committed in prose, and the verdict token transcribes that commitment rather
than deciding it.

The consequence is a scope limit on black-box uncertainty methods for this
failure. Token-level verdict probabilities cannot probe the commitment
decision in a judge that reasons in text before emitting the verdict, and the
absence of probability mass on the authorized label is not evidence that the
label was never considered. Our evidence for the dissociation therefore rests
on the behavioural probe above and on the intervention in
\S\ref{sec:ablation}.

\subsection{The rates depend on the serving route}
\label{sec:recognition:route}
Because the logprob run used the unmodified verdict prompt, it also produced a
second measurement of the verdict itself for one model, on a different
serving route. Logprobs are unavailable on the route the main matrix pins for
gpt-oss-120b, so that run was served elsewhere with the same model
identifier, the same prompt bytes, the same cases, and greedy decoding.

The two routes return the same verdict on $217$ of $279$ comparable cases
($0.778$), and differ on $0.222$ of them. On the $147$ gold-\textsc{Conflicting}
cases both routes parsed, CPO is $71/147=0.483$ (Wilson $[0.404,0.563]$) on
the pinned route and $57/147=0.388$ ($[0.313,0.468]$) on the alternative. The
intervals overlap, so we do not claim the two rates differ; the point
estimates differ by $9.5$ percentage points, which is larger than several
differences the paper treats as worth reporting between models. The
disagreement is also directional: $28$ cases move from a directional verdict
on the pinned route to \textsc{Conflicting} on the alternative, against $5$
moving the other way.

We report this as a bound on the reproducibility of open-weight results rather
than as a finding about either route. A serving stack is not part of a model's
identity as usually cited, and the observation here is that two stacks serving
one checkpoint under one prompt disagree on roughly a fifth of terminal
verdicts. Any open-weight rate in this paper should be read with that in mind,
and we do not know which route a reader attempting replication would receive.
The measurement covers one model on two routes, so it establishes that the
sensitivity exists without estimating its typical size.

\section{Channel-Complementarity Tests}\label{sec:ablation}

Each test below asks a channel-complementarity question: \emph{does
adding channel $X$ recover something channel $Y$ cannot, or does either
alone suffice?} The goal is diagnosis rather than defense of a controller.
Table~\ref{tab:ablation} reports the matched-coverage numbers on
AVeriTeC.

\begin{table*}[t]
\centering
\footnotesize
\setlength{\tabcolsep}{4pt}
\caption{Channel-complementarity diagnostic on AVeriTeC ($N{=}285$), at
comparable coverage. This is a diagnostic of channel complementarity,
not a method ablation proving superiority.
$^{\ddagger}$The row added for exact coverage matching against Stage~2
alone; see the threshold-sensitivity paragraph.}
\label{tab:ablation}
\begin{tabular}{lrrrr}
\toprule
Controller & $\mathrm{Cov}$ & $\mathrm{SE}$ & $\mathrm{CPO}_N$ & $\mathrm{Acc}_{\mathrm{S/R}}$ \\
\midrule
Two-channel probe ($F$ $\tau$=0.85) & 0.281 & 0.200 & 0.046 & 0.640 \\
\;$-$ confidence gate (Stage~1 only) & 0.295 & 0.226 & 0.053 & 0.650 \\
\;$-$ structural veto (Stage~2 only) & 0.288 & 0.207 & 0.056 & 0.650 \\
Two-channel probe ($F$ $\tau$=0.80)$^{\ddagger}$ & 0.288 & 0.220 & 0.053 & 0.640 \\
Validator-as-classifier               & 0.193 & 0.291 & 0.053 & \textbf{0.390} \\
Panel-margin rule (tuned)             & 0.281 & 0.250 & 0.053 & 0.600 \\
Calibrated (degenerate)               & 0.413 & 0.305 & 0.105 & 0.820 \\
\bottomrule
\end{tabular}
\end{table*}

\paragraph{Can structural evidence replace the verdict? No.}
Used as a primary classifier, the validator collapses
$\mathrm{Acc}_{\mathrm{S/R}}$ from $0.78$ to $0.39$
(Table~\ref{tab:ablation}). A strict evidence-state rule cannot
substitute for verdict generation; structural evidence is useful only
as a veto channel on top of typed proposals.

\paragraph{Does either single channel reach the two-channel point? No.}
Stage~2 alone (confidence-only) reaches $\mathrm{SE}\,0.207$ at
$\mathrm{Cov}\,0.288$; Stage~1 alone (structural veto on the typed
proposal) reaches $\mathrm{CPO}_N\,0.053$ but
$\mathrm{SE}\,0.226$. Neither subsumes the other: on the $N{=}28$ CPO
commits the validator's \texttt{material\_mixed} flag fires on $46\%$
($13/28$, Wilson $[0.30,0.64]$) of cases the confidence gate passes, while the confidence
gate, calibrated for direction prediction (ECE $=0.07$ on
pure-S/R), catches a complementary subset of hesitant-but-committed
cases. The validator's coverage of CPO is non-trivial but not complete;
its false-alarm rate on pure-S/R is $21\%$ ($21/100$, $[0.14,0.30]$), which bounds the
directional-accuracy cost of using it as a veto channel. The two channels
provide complementary but incomplete screening signals.

\paragraph{How much does the matched-coverage comparison depend on which
thresholds are paired? Enough to change its sign.}
The rows above pair $F$ at $\tau{=}0.85$ (coverage $0.281$) against
Stage~2 alone at $\tau{=}0.90$ (coverage $0.288$). Those coverages are
close but not equal, so we swept both controllers over a $0.005$-spaced
grid of $\tau\in[0.50,0.99]$ and compared them only where coverage
matches exactly. Because the panel's mean confidence is heavily
discretized, the sweep yields just $17$ distinct coverage levels for the
confidence-only controller and $15$ for the two-channel probe, and the
two curves coincide at only a few of them.

At coverage exactly $0.288$, the two-channel probe ($\tau{=}0.80$) is
\emph{worse} than confidence alone ($\tau{=}0.90$) on selective error
($0.220$ versus $0.207$) and on $\mathrm{Acc}_{\mathrm{S/R}}$ ($0.640$
versus $0.650$), gaining only $0.007$ on
$\mathrm{CPO}_{\mathrm{C}}$. At coverage exactly $0.256$ the ordering
reverses and the probe is better on all three ($-0.041$ selective error,
$-0.027$ $\mathrm{CPO}_{\mathrm{C}}$, $+0.030$ accuracy). The sign of the
comparison therefore depends on the matching point, and the pairing in
Table~\ref{tab:ablation} is one of the favourable ones. We report this
rather than defend the pairing.

A threshold-free comparison is also possible but must be restricted. The
two-channel probe cannot exceed coverage $0.295$ at any $\tau$, because
Stage~1 vetoes flagged cases unconditionally, whereas confidence alone
reaches $0.396$; averaging each risk--coverage curve over its own span
would score the probe only on its narrow low-coverage region and is not
a valid comparison. Over the common support $[0.007,0.295]$, mean
selective error is $0.108$ for confidence alone and $0.095$ for the
two-channel probe, a difference of $-0.012$. The probe is slightly
ahead on this threshold-free view, and the margin is of the same order
as the threshold-selection effect above.

\paragraph{Does evidence structure add screening information beyond
confidence? Yes, with an interval away from zero.}
The comparisons above are between \emph{controller operating points},
where matched coverage forces a single threshold choice and the
resulting differences are small. The underlying question --- whether the
structural channel carries information the confidence channel does not
--- is better asked directly, as a ranking problem, without committing
to an operating point. We pool the directional outputs of the three
typed judges ($344$ outputs over $161$ cases), take the dataset conflict
label as a provisional target, and compare four feature specifications
under five-fold cross-validation \emph{grouped by case}, with all
preprocessing fit inside each training fold and model fixed effects in
every specification (Table~\ref{tab:feature_signals}).

\begin{table}[h]
\centering
\footnotesize
\setlength{\tabcolsep}{4pt}
\caption{Out-of-fold screening performance on directional outputs
($n{=}344$ from $161$ cases, three typed judges). Grouped five-fold
cross-validation by case; model fixed effects throughout. The target is
the dataset conflict label, so these values are provisional
(\S\ref{sec:limits}).}
\label{tab:feature_signals}
\begin{tabular}{lrrr}
\toprule
Features & AUPRC & AUROC & Rec.@20\% \\
\midrule
model only                & 0.335 & 0.523 & 0.220 \\
confidence only           & 0.464 & 0.694 & 0.312 \\
structure only            & 0.523 & 0.718 & 0.376 \\
confidence $+$ structure  & \textbf{0.589} & \textbf{0.780} & \textbf{0.422} \\
\bottomrule
\end{tabular}
\end{table}

Adding structural features to confidence raises AUPRC from $0.464$ to
$0.589$; the paired case-cluster bootstrap difference is $+0.125$ with a
$95\%$ interval of $[+0.018,+0.254]$ ($2000$ resamples). Structure alone
also outperforms confidence alone ($0.523$ versus $0.464$), so neither
channel dominates. This is the strongest form of the complementarity
claim the cached data supports, and unlike the matched-coverage
comparisons its interval excludes zero.

Two qualifications are load-bearing. The target is the \emph{dataset}
conflict label rather than an independent human judgment, so the
quantity estimated is agreement with the benchmark's own labelling and
must be recomputed before it can support a construct-level claim. And a
ranking result does not transfer automatically to a deployed threshold:
an AUPRC gain of this size is compatible with the small,
interval-straddling differences in Table~\ref{tab:ablation}, because
converting a better ranking into a better operating point requires a
threshold that our sample sizes cannot select reliably. The two results
are consistent, and they answer different questions.

\paragraph{Is the probe's Stage-1 structurally targeted, or matched by random promotion?}
We use the apples-to-apples random Stage-1 null
(Section~\ref{sec:method:protocol}): the probe promotes $k$ of the
confidence baseline's directional commits to \textsc{Conflicting}; the
null draws $2000$ random size-$k$ subsets and promotes them to
\textsc{Conflicting} as well. On AVeriTeC, no random subset matches the
probe on $\mathrm{Acc}_{\mathrm{S/R}}$ (empirical $p<1/2001$); the
$\mathrm{Rec}_{\mathrm{C}}$ axis reaches the same $p$, but the two are
algebraically coupled under fixed $k$ (each promotion either lowers
$\mathrm{Acc}_{\mathrm{S/R}}$ or raises $\mathrm{Rec}_{\mathrm{C}}$),
so we report the joint outcome as one selection event expressed on two
axes rather than as two independent tests. The probe sits near the
random median on $\mathrm{SE}$ and $\mathrm{CPO}_N$
(Figure~\ref{fig:random_veto}). On VitaminC, the result is weaker:
$\mathrm{Rec}_{\mathrm{C}}$ stays separated at $p=0.002$, but
$\mathrm{Acc}_{\mathrm{S/R}}$, $\mathrm{SE}$, and $\mathrm{CPO}_N$
sit at $p$ between $0.04$ and $0.07$; the probe's promotions are
better than random but not extreme. By Lemma~\ref{lem:monotonic}, the
probe cannot widen the directional commit set at fixed $\tau$, so any
gain comes from \emph{which} commits it promotes, not from how many.

We note the limited strength of this control. Rejecting the random null
establishes only that the \texttt{material\_mixed} flag correlates with
the gold conflict label better than chance, which is a weaker statement
than the ranking comparison above and is largely implied by it. We
retain the null because it operates on the controller's actual
promotions rather than on a fitted score, but the incremental-value
result in Table~\ref{tab:feature_signals} is the primary evidence for
channel complementarity; the null is a supporting control that tests
targeted selection and does not certify magnitude dominance.

\paragraph{Do richer single-channel rules help? No.}
A vote-distribution (panel-margin) rule does not improve on the
confidence threshold at matched coverage ($\mathrm{SE}\,0.250$ vs
$0.207$): typed panel \emph{agreement} carries no signal beyond mean
confidence on this substrate. Calibrating the controller against a CPO
target is degenerate at our sample sizes: every target collapses to a
single configuration near the typed-direct baseline, so we report it
as a negative result and leave calibration over richer parameter spaces
to future work.

\paragraph{How much of the reported residual is the instruction's doing?
Most of it --- which is what makes the reported rates a floor.}
Every rate in this paper is measured under a prompt that argues for
\textsc{Conflicting} (\S\ref{sec:method}). To bound how much of the residual
survives that argument rather than being produced by it, we re-ran the fixed
sample under two weaker specifications of the same four labels. The
\emph{reduced} specification keeps a substantive definition of
\textsc{Conflicting} and the instruction not to collapse a mixed case to
\textsc{Refutes}, but drops the four enumerated trigger conditions and all
three demonstrations. The \emph{neutral} specification keeps only the four
label names with a one-clause gloss each: no trigger conditions, no
instruction, no examples. Nothing else changes.

\begin{table*}[t]
\centering
\footnotesize
\setlength{\tabcolsep}{3pt}
\caption{\textbf{Prompt-specification ladder.} CPO on gold-\textsc{Conflicting}
claims under weaker specifications of the same four labels, paired by claim
against the prompt used throughout the paper ($10{,}000$-resample case
bootstrap). In the \emph{reduced} and \emph{neutral} columns the leading
number is that variant's CPO rate and the bracket is the interval on the
\emph{paired difference} from the strong prompt, not on the rate itself;
bold marks the variants whose interval excludes zero.
$\Delta_{\textsc{fc}}$ is the accompanying change in false
\textsc{Conflicting} on one-sided evidence; it is negative almost everywhere,
so the strong prompt's advantage on conflicting claims is not obtained by
labelling everything conflicting.}
\label{tab:prompt_ladder}
\begin{tabular}{lccc c}
\toprule
Judge & strong & reduced & neutral & $\Delta_{\textsc{fc}}$ (neutral) \\
\midrule
GPT-5.6 Terra   & 0.293 & $0.293\;[-0.073,+0.073]$ & $\mathbf{0.633}\;[+0.253,+0.427]$ & $-0.090$ \\
Claude Sonnet~5 & 0.302 & $0.336\;[-0.020,+0.094]$ & $\mathbf{0.644}\;[+0.262,+0.423]$ & $-0.030$ \\
gpt-oss-120b    & 0.483 & $\mathbf{0.617}\;[+0.047,+0.221]$ & $\mathbf{0.631}\;[+0.067,+0.242]$ & $-0.020$ \\
Mistral Small~4 & 0.213 & $\mathbf{0.467}\;[+0.173,+0.333]$ & $\mathbf{0.593}\;[+0.300,+0.460]$ & $-0.150$ \\
\bottomrule
\end{tabular}\\[2pt]
\footnotesize Bracketed ranges are paired $95\%$ intervals for the change from
the strong prompt. Bold marks an interval excluding zero. Each comparison is
paired on the claims both conditions parsed, so the strong-prompt baseline can
differ marginally between the two columns; the values shown are the
reduced-comparison baselines, and the only case where they diverge is
gpt-oss-120b, whose neutral-comparison baseline is $0.477$ rather than $0.483$.
\end{table*}

Under the neutral specification every judge overcommits substantially more,
by $+0.154$ to $+0.380$, with all four intervals excluding zero
(Table~\ref{tab:prompt_ladder}). In claim counts on this sample the strong
prompt removes $23$, $51$, $51$ and $57$ unauthorized directional
commitments relative to neutral, and the accompanying change in false
\textsc{Conflicting} is $-2$, $-9$, $-3$ and $-15$: the instruction buys
those commitments back while producing \emph{fewer} spurious conflict
labels, not more. This is the opposite of the pattern the decomposition
intervention shows below, and it is why the rates reported throughout are
described as a floor.

The reduced specification separates which part of the instruction matters.
The two proprietary judges are indifferent to losing the trigger enumeration
and the three demonstrations: GPT-5.6 Terra is unchanged at $0.293$ and
Claude Sonnet~5 moves $+0.034$ with an interval spanning zero. The two
open-weight judges are not, degrading by $+0.134$ and $+0.253$. What all four
retain in the reduced condition is the substantive definition of
\textsc{Conflicting} and the instruction not to collapse a mixed case, and
what all four lose in the neutral condition is exactly that. On this sample,
then, the worked examples are load-bearing only for the open-weight judges,
while the instruction itself is load-bearing for every judge tested.

\paragraph{Does making the judge decompose the evidence first remove the
failure? It halves it, at a price that differs by model.}
\S\ref{sec:recognition} showed that two judges can identify the conflict when
asked separately. The obvious follow-up is interventional: if the recognition
is available, does forcing the judge to produce it before deciding connect it
to the terminal action? We test this with a two-stage variant on the same fixed
sample. In a fresh context the judge lists the evidence items supporting the
claim and those refuting it, with no verdict vocabulary in the output schema;
in a second fresh context it receives the claim, the same evidence, and its own
decomposition, and issues the verdict under the byte-identical instruction
block used everywhere else in this paper. Only the presence of the model's own
decomposition differs.

Table~\ref{tab:decomposition} reports the effect on directional overcommitment
and, beside it, what that reduction cost on one-sided evidence. Reporting only
the first half would misdescribe three of the four models.

\begin{table*}[t]
\centering
\footnotesize
\setlength{\tabcolsep}{3pt}
\caption{\textbf{Decomposition before verdict, and what it costs.} Left: CPO on
gold-\textsc{Conflicting} claims before and after the judge decomposes the
evidence itself. Right: the \textsc{Conflicting} rate on gold
\textsc{Supports}/\textsc{Refutes} claims, where \textsc{Conflicting} is
\emph{wrong}. Both are paired by claim with a case bootstrap
($10{,}000$ resamples); $n$ is the number of claims scored under both
conditions. The final column converts the two rates into claim counts on this
sample, so that a reduction bought by relabelling can be read directly.}
\label{tab:decomposition}
\begin{tabular}{lrcccccc c}
\toprule
& & \multicolumn{3}{c}{CPO on gold-\textsc{Conflicting}} & \multicolumn{3}{c}{False \textsc{Conflicting} on pure S/R} & \\
\cmidrule(lr){3-5}\cmidrule(lr){6-8}
Judge & $n$ & base & $+$decomp & $\Delta$ (95\% CI) & base & $+$decomp & $\Delta$ (95\% CI) & net claims \\
\midrule
GPT-5.6 Terra   & 150 & 0.293 & 0.180 & $-0.113\;[-0.187,-0.047]$ & 0.090 & 0.200 & $+0.110\;[+0.030,+0.190]$ & $17-11=6$ \\
Claude Sonnet~5 & 149 & 0.302 & 0.221 & $-0.081\;[-0.148,-0.013]$ & 0.051 & 0.082 & $+0.031\;[-0.020,+0.082]$ & $12-3=9$ \\
gpt-oss-120b    & 149 & 0.483 & 0.389 & $-0.094\;[-0.181,-0.007]$ & 0.040 & 0.040 & $+0.000\;[-0.050,+0.050]$ & $14-0=14$ \\
Mistral Small~4 & 148 & 0.216 & 0.250 & $+0.034\;[-0.041,+0.108]$ & 0.180 & 0.130 & $-0.050\;[-0.130,+0.030]$ & $-5+5=0$ \\
\bottomrule
\end{tabular}
\end{table*}

Three judges commit directionally less often after decomposing, with paired
intervals excluding zero; the fourth does not move. The reduction is partial in
every case. Residual CPO remains between $0.180$ and $0.389$, and of the claims
on which a judge overcommitted in the baseline, decomposition recovers $24/44$,
$20/45$, $30/72$ and $14/32$ --- roughly half at best.

The cost column changes how much of this counts as improvement. For GPT-5.6
Terra the false-\textsc{Conflicting} rate on one-sided evidence more than
doubles, from $0.090$ to $0.200$, with an interval excluding zero. In claim
counts on this sample, decomposition removes $17$ unauthorized directional
commitments and introduces $11$ new unauthorized conflict labels: the judge did
not discriminate better, it moved its whole label distribution toward
\textsc{Conflicting}. Claude Sonnet~5 pays a smaller and interval-spanning
price, and gpt-oss-120b pays none measurable, so its $-0.094$ is the one
reduction here that is unambiguously a gain in discrimination rather than a
shift in disposition.

Two readings follow, and we state both because the design supports both. An
instruction-level intervention does move this failure --- an earlier reading of
these results as fully instruction-resistant would have been wrong. It also
fails to close the gap: the majority of directional overcommitment survives a
procedure that hands the judge its own two-sided reading of the evidence, and
on the strongest judge tested most of the apparent gain is indiscriminate. As a
consistency check between the two instruments, each judge's own stage-1
decomposition agrees with the independent fresh-context probe of
\S\ref{sec:recognition} on $0.80$, $0.85$, $0.68$ and $0.73$ of
gold-\textsc{Conflicting} claims.

\paragraph{Is the veto a few-shot prompt artifact? No.}
Under a strict zero-shot variant of the validator prompt, the
\texttt{material\_mixed} flag agrees with the few-shot version on $86\%$
of AVeriTeC claims ($N{=}285$, Wilson $[0.81,0.90]$), and re-running the two-channel probe with the
strict variant produces qualitatively the same operating points
($\mathrm{SE}\,0.216$, $\mathrm{CPO}_N\,0.049$ at $\tau$=0.85, versus
$0.200$ and $0.046$); the validator-as-classifier collapse persists
under both prompts. The veto signal is not purely a few-shot artifact.

\paragraph{Summary.}
The diagnostic tests above characterize channel complementarity in the
tested design space. As a \emph{screening} signal, evidence structure
adds information beyond confidence: combining the channels raises
out-of-fold AUPRC by $+0.125$ with a case-clustered interval of
$[+0.018,+0.254]$, and structure alone outperforms confidence alone, so
neither channel subsumes the other. As a \emph{controller}, that
advantage does not convert into a reliable operating-point gain:
matched-coverage differences are small and their bootstrap intervals
straddle zero. Structural evidence also cannot replace verdict
generation, richer single-channel aggregation does not beat a simple
confidence threshold, and the random-veto control indicates the probe's
veto is targeted rather than arbitrary. Specification strength matters more
than any downstream filter tested here: a neutral wording of the same four
labels raises overcommitment by $+0.154$ to $+0.380$ across all four judges,
so the rates reported in this paper are a floor rather than a typical value.
Making the judge decompose the evidence before deciding lowers directional
overcommitment for three of four judges but leaves most of it standing, and on
one judge the reduction is largely offset by new false \textsc{Conflicting}
labels.

The distinction between the two paragraphs above is the section's main
result and we state it plainly: the channels are demonstrably
non-substitutable as sources of information, and we have not shown that
this non-substitutability can be cashed out as a better deployed
threshold at these sample sizes. Both the ranking result and the
operating-point result use the dataset conflict label as their target
and are provisional pending independent human adjudication.

\section{Scope Conditions and Limitations}\label{sec:limits}

\subsection{Contract boundary}
\label{sec:limits:contract}
CPO applies when the task schema exposes \textsc{Conflicting} as a
valid verdict. We do not claim that all mixed-evidence claims are
absolutely undecidable, nor that downstream tasks must always avoid a
leaning answer; we study the contract above and call its specific
failure mode CPO (Table~\ref{tab:scope_boundary}). Deployments whose
contract explicitly waives the \textsc{Conflicting} option (e.g.\
forced-choice summaries where a leaning answer is the documented
expectation) fall outside the diagnosis.

\textbf{Gold-label dependency; CPO is contract-relative.}
CPO is defined relative to the benchmark's task contract: gold $=$
\textsc{Conflicting} marks the schema-authorized non-directional
verdict, not an independently validated semantic state of the evidence.
The numerical results in this draft are therefore contract-relative rather
than claims about the underlying epistemic situation of every item. AVeriTeC
gold is expert annotation
\citep{schlichtkrull2023averitec}; VitaminC-Mixed gold is constructive,
derived by concatenating a supporting and a refuting sentence from
each VitaminC contrastive pair \citep{schuster2021vitaminc}. The
cross-dataset pattern we report therefore holds under two different
annotation regimes, peer-reviewed expert labels and benchmark
construction, but does not validate either gold standard independently.

\textbf{The independent audit was attempted and not completed.} We designed a
blinded two-reviewer preserve-versus-resolve review of the AVeriTeC sample and
ran its calibration round; the main audit was not carried out. We report the
calibration round because it bounds what the rest of the paper can claim.

Two reviewers independently labelled the same $20$ calibration claims under one
written rubric. They matched on $10$ of $20$ five-way decisions
($\kappa=0.349$). Collapsing to the distinction this paper depends on ---
preserve the conflict versus resolve it --- raises raw agreement to $15$ of
$20$ while $\kappa$ falls to $-0.136$. Because the two reviewers' positive
selections do not overlap, every one of those $15$ agreements is an agreement
that a case is \emph{not} conflict-preserving; on the positive class they
agree on nothing. One selected three claims, the
other two, with an empty intersection; joint adjudication of the five
disagreements retained one of them as conflict-preserving. Separately, one of us made a claim-only pass
over the same $20$, reading each claim without its evidence and asking
whether it states a determinate truth condition; $9$ do not. That pass was
single-annotator by design: whether a claim is answerable at all does not
depend on the evidence, and it is not the judgment the two reviewers diverged
on. The recurring
defects were an absent time anchor, so that the claim has no interval over
which it is asserted to hold ($4$ of $9$), and a predicate carrying no stated
criterion, such as \emph{secure}, \emph{unfair}, or \emph{life-saving} ($4$ of
$9$); the remaining case quantifies over an open-ended set of future
experiments.

These are observations on $20$ claims, not rates, and we draw one conclusion
from them: we report no human-confirmed prevalence anywhere in this paper.
Every number here is dataset-defined, measured against the benchmark's own
\textsc{Conflicting} label, and the reader should read all of them that way.
Two things follow for work that extends this. The construct is harder to apply
than its definition suggests --- two readers working from one rubric did not
converge on its positive class --- which is part of why judge behaviour on
these cases resists interpretation, and it means a larger audit is not simply a
matter of annotating more items. And such an audit would have to settle whether
a claim is answerable at all before adjudicating conflict about it, since a
claim with no determinate truth condition admits no verdict, conflict-preserving
or otherwise.

\subsection{Evidence and dataset boundary}
\label{sec:limits:data}
We evaluate text-based fact verification only, on two substrates:
AVeriTeC (stratified $N{=}285$) and VitaminC-Mixed ($N{=}250$).

\textbf{VitaminC-Mixed is a stress test, not a second prevalence
estimate.} AVeriTeC contains naturally occurring conflicting cases;
VitaminC-Mixed synthesizes them by concatenating the two sides of a
contrastive pair. Those pairs are drawn from Wikipedia revisions before
and after an edit, so a substantial share of the resulting
``conflicts'' are plausibly \emph{temporal updates} rather than
unresolved disagreement --- and where a claim is properly time-scoped,
resolving toward the later revision is the correct behaviour, not
overcommitment. We have not measured that share, so we cannot say how
much of the VitaminC rate it accounts for. Two consequences follow.
Any VitaminC number in this paper should be read as behaviour under a
constructed stress condition rather than as a second estimate of
prevalence, and the reversal we report there --- three of four judges
overcommitting more on gold-\textsc{Insufficient} than on
gold-\textsc{Conflicting} --- is correspondingly weaker evidence than
the AVeriTeC comparison beside it. Quantifying the temporal share is
the first thing a cross-dataset claim would require.

\textbf{Open-weight rates depend on the serving route.} Logprobs are
unavailable on the route the main matrix pins for gpt-oss-120b, so one
run was served elsewhere with the same model identifier, prompt bytes,
cases and greedy decoding. The two routes returned the same verdict on
$217$ of $279$ comparable claims ($0.778$), and CPO differed by $9.5$
percentage points between them (\S\ref{sec:recognition:route}). We
observed this for one model on two routes, which establishes that the
sensitivity exists without estimating its typical size. Every
open-weight number here should be read with that in mind, and we do not
know which route a reader attempting replication would receive.

Broader replication (other fact-verification datasets, retrieval-augmented
pipelines, multimodal evidence) is future work. The N=10 qualitative
audit (Appendix~\ref{sec:appendix:cases}) is illustrative, not
statistical; we report no rate from it. We do not extend to general
agentic-AI settings or multi-agent deliberation benchmarks where
commitment timing is the metric.

\subsection{Method boundary}
\label{sec:limits:method}
The two-channel probe is a deterministic reference probe, not a
learned controller; replacing the deterministic Stage~1 with a
calibrator that weights subclaim importance against the main
directional question is the explicit subject of follow-up work. We
make no formal coverage or risk bound: the structural-selectivity
claim is empirical (a random-veto null), not theoretical;
distribution-free calibration of the authorization rate is future
work. The intervention ladder and its panel-mechanism analyses are
restricted to three proprietary model families (Claude Haiku~4.5, Claude
Sonnet~4.5, GPT-4o-mini), so the AVeriTeC panel-amplification result may
interact with that ensemble. The contemporary replication broadens
individual-judge coverage to two proprietary and two open-weight models,
but one endpoint per model and two models per access regime cannot identify
an access-regime effect. The open-weight models are evaluated through
pinned hosted endpoints rather than local inference, and the serving
quantization is not disclosed. Smaller and multimodal judges remain
untested.
Matched-coverage bootstrap CIs on the magnitude $\Delta$s over
confidence-only selection straddle zero at our sample sizes; the
structural-selectivity finding under the apples-to-apples random
Stage-1 null is the robust cross-method claim, and
Appendix~\ref{sec:appendix:cases} cases 5 and 6 show the honest cost
of a deterministic veto.

\textbf{Validator reliability and the price of typed vocabulary.}
The structural validator is itself imperfect: it fires
\texttt{material\_mixed} on $46\%$ ($13/28$, Wilson $[0.30,0.64]$) of
the AVeriTeC CPO subset (the ceiling on what a Stage-1 veto built solely
on this signal can recover) and produces a $21\%$ ($21/100$,
$[0.14,0.30]$) false-alarm rate on the pure-S/R subset (the cost that
any structural-veto controller pays in directional accuracy). The first
of these rests on $28$ cases and its interval spans nearly the whole
plausible range, so it bounds the design space loosely rather than
pinning down a ceiling.
Independent of the validator, the typed-vocabulary schema itself carries
a cost: on the pure-S/R subset the panel returns \textsc{Conflicting}
on $17\%$ of cases ($17/100$, $[0.11,0.26]$), so adopting CONFLICTING as a verdict is
not a free intervention. These imperfections motivate the framing of
the probe as a reference instantiation: a learned, evidence-aware
controller has room to dominate it on both axes.

\textbf{Honest acknowledgement of finding strength.} The magnitude
claims divide, and we separate them rather than averaging them into one
verdict. Two are supported with intervals away from zero: evidence
structure adds screening information beyond confidence (out-of-fold
AUPRC $+0.125$, $[+0.018,+0.254]$), and specification strength moves
overcommitment substantially (a neutral wording raises CPO by $+0.154$
to $+0.380$ across all four judges). One is not: the controller's
matched-coverage $\Delta$s against the confidence baseline are not
separated from zero at our sample sizes, and \S\ref{sec:ablation} shows
the sign of that comparison depends on which thresholds are paired.
Ranking better and thresholding better are different claims, and only
the first is established here. What supports the controller is a
\emph{selectivity} finding: its Stage-1 promotes a
non-random subset of E's directional commits to \textsc{Conflicting}
(strong on AVeriTeC, $p<1/2001$ on both direction-preservation
metrics; weaker but in the same direction on VitaminC-Mixed, $p$ in
the $0.04$--$0.07$ range on $\mathrm{Acc}_{\mathrm{S/R}}$ /
$\mathrm{SE}$ / $\mathrm{CPO}_N$, $p<1/2001$ on
$\mathrm{Rec}_{\mathrm{C}}$). The paper's empirical core is therefore:
directional overcommitment persists under a contract that authorizes a
non-directional verdict and under an instruction that argues for one;
it is not specific to conflict; on the two judges where the probe is
valid it occurs on claims the judge can itself identify as two-sided;
and the interventions we tested leave distinct residuals rather than
removing it. This is a diagnostic contribution, not a method
contribution.

\subsection{Design implications: arguments, not demonstrations}
\label{sec:limits:disc}
The implications below are argued from the ladder analysis; we do not
demonstrate that adopting them produces measurable system-level
benefits, and end-to-end validation in a deployed pipeline (e.g.\ with
explicit escalation pathways and downstream cost measurement) is
itself follow-up work.

\textbf{The separation principle.} We argue that verdict generation
and commitment authorization may benefit from architectural
separation: an external controller authorizes or withholds the
commitment based on structural-evidence and confidence signals, and
the system has an explicit place to route withheld cases. In practice
this suggests a pipeline can keep the judge it already has and add a
thin authorization surface in front of the recorded commitment, rather
than trying to make the judge itself never over-commit.
\textbf{\textsc{No-Commit} as a routed state.} \textsc{No-Commit} is a
routed controller state, not a verification verdict: a routed claim
may still receive a directional verdict once escalated with more
evidence or sent to a human. \textbf{Complementary channels.} A
confidence threshold does not distinguish a structurally mixed claim
the judge settles decisively, and a deterministic structural veto does
not catch a confidently wrong commit it does not flag; we suggest a
commitment-control layer accept both, and that a learned layer weight
subclaim importance against the main directional question, addressing the
failure mode the subclaim-mismatch cases expose. \textbf{When the trade is rational.} External
commitment-control layers are most relevant where unwarranted directional
commitment is costlier than escalation; under such deployment profiles
the same measurements convert into a cost-sensitive operating-point
analysis, reported as a supplementary view in
Appendix~\ref{sec:appendix:cost} rather than as a primary claim.

\section{Conclusion}\label{sec:conclusion}

In LLM-judge-based AI systems, the question ``what is the verdict?'' is
not the same as the question ``is the system authorized to commit this
verdict?'' Confusing them produces \emph{directional overcommitment}: a
direction asserted on a claim for which the contract authorized a
non-directional verdict. Its conflict instance is \textbf{Cherry-pick
Override}, where both strands are present and one is returned; its
insufficiency instance occurs where no strand is present at all. We
measured both and found the failure is not specific to conflict --- the
rates are comparable, and on VitaminC-Mixed the insufficiency instance is
the larger one for three of four judges. What distinguishes CPO is not a
separate trigger but a separate remedy: it is the instance that
retrieving more evidence cannot address, because nothing was missing.
Walking a
same-denominator intervention ladder over common single-channel fixes,
we find that each tested intervention leaves a residual: typed
vocabulary leaves a structurally avoidable residual, panel voting raises
direction-on-conflict in the pre-contract AVeriTeC baseline and can suppress
typed conflict dissent, confidence is well-calibrated for direction but
cannot operationally separate CPO from correct directional commits,
a deterministic validator cannot replace verdict generation, and having the
judge decompose the evidence itself before deciding lowers overcommitment for
three of four judges while leaving most of it standing --- on the strongest
judge largely by relabelling one-sided evidence as conflicting. A frozen
replication further finds dataset-defined CPO in every tested contemporary
individual judge, including two open-weight models, while a cross-dataset
ranking reversal cautions against treating any model family as a universal
fix. The core lesson is that improving the judge
does not answer the authorization question. We argue that verification
under a contract with non-directional verdicts may benefit from an
explicit commitment-control layer that decides whether a directional
verdict is authorized to be recorded, routed, or withheld; demonstrating end-to-end
deployment benefit, designing learned, evidence-aware \emph{selective
commitment controllers}, and integrating routed-state cases with
escalation pathways are all explicitly left to follow-up work.

\bibliography{aaai2026}

\begin{thebibliography}{23}
\providecommand{\natexlab}[1]{#1}

\bibitem[{Choi et~al.(2025)Choi, Park, Lee, and Lee}]{choi2025care}
Choi, E.; Park, J.; Lee, H.; and Lee, J. 2025.
\newblock Conflict-Aware Soft Prompting for Retrieval-Augmented Generation.
\newblock In \emph{Proceedings of the 2025 Conference on Empirical Methods in
  Natural Language Processing}, 26981--26995.

\bibitem[{Chow(1957)}]{chow1957optimum}
Chow, C.~K. 1957.
\newblock An optimum character recognition system using decision functions.
\newblock \emph{IRE Transactions on Electronic Computers}, EC-6(4): 247--254.

\bibitem[{El-Yaniv and Wiener(2010)}]{elyaniv2010foundations}
El-Yaniv, R.; and Wiener, Y. 2010.
\newblock On the foundations of noise-free selective classification.
\newblock \emph{Journal of Machine Learning Research}, 11: 1605--1641.

\bibitem[{Ge et~al.(2025)Ge, Wu, Chin, Lee, and Cao}]{ge2025confact}
Ge, Z.; Wu, Y.; Chin, D. W.~K.; Lee, R. K.-W.; and Cao, R. 2025.
\newblock Resolving Conflicting Evidence in Automated Fact-Checking: A Study on
  Retrieval-Augmented {LLM}s.
\newblock \emph{arXiv preprint arXiv:2505.17762}.

\bibitem[{Geifman and El-Yaniv(2017)}]{geifman2017selective}
Geifman, Y.; and El-Yaniv, R. 2017.
\newblock Selective Classification for Deep Neural Networks.
\newblock In \emph{Advances in Neural Information Processing Systems}.

\bibitem[{Geifman and El-Yaniv(2019)}]{geifman2017selectivenet}
Geifman, Y.; and El-Yaniv, R. 2019.
\newblock {SelectiveNet}: A deep neural network with an integrated reject
  option.
\newblock In \emph{International Conference on Machine Learning (ICML)}.

\bibitem[{Han, Lan, and Kilicoglu(2026)}]{han2026evidenceconflicts}
Han, Y.; Lan, M.; and Kilicoglu, H. 2026.
\newblock When Evidence Conflicts: Uncertainty and Order Effects in
  Retrieval-Augmented Biomedical Question Answering.
\newblock In \emph{Proceedings of the 25th Workshop on Biomedical Language
  Processing ({BioNLP})}, 630--643. Association for Computational Linguistics.

\bibitem[{Jiayang et~al.(2024)Jiayang, Chan, Zhuang, Qiu, Zhang, Liu, Song,
  Zhang, Liu, and Zhang}]{jiayang2024econ}
Jiayang, C.; Chan, C.; Zhuang, Q.; Qiu, L.; Zhang, T.; Liu, T.; Song, Y.;
  Zhang, Y.; Liu, P.; and Zhang, Z. 2024.
\newblock {ECON}: On the Detection and Resolution of Evidence Conflicts.
\newblock In \emph{Proceedings of the 2024 Conference on Empirical Methods in
  Natural Language Processing}, 7816--7844. Miami, Florida, USA: Association
  for Computational Linguistics.

\bibitem[{Jin et~al.(2024)Jin, Cao, Chen, Liu, Jiang, Xu, Qiuxia, and
  Zhao}]{jin2024tugofwar}
Jin, Z.; Cao, P.; Chen, Y.; Liu, K.; Jiang, X.; Xu, J.; Qiuxia, L.; and Zhao,
  J. 2024.
\newblock Tug-of-War between Knowledge: Exploring and Resolving Knowledge
  Conflicts in Retrieval-Augmented Language Models.
\newblock In \emph{Proceedings of the 2024 Joint International Conference on
  Computational Linguistics, Language Resources and Evaluation
  ({LREC-COLING})}, 16867--16878.

\bibitem[{Jung, Brahman, and Choi(2025)}]{jung2025trust}
Jung, J.; Brahman, F.; and Choi, Y. 2025.
\newblock {Trust or Escalate}: {LLM} Judges with Provable Guarantees for Human
  Agreement.
\newblock In \emph{Proceedings of the International Conference on Learning
  Representations ({ICLR})}.
\newblock ArXiv:2407.18370.

\bibitem[{Kalai et~al.(2025)Kalai, Nachum, Vempala, and
  Zhang}]{kalai2025hallucinate}
Kalai, A.~T.; Nachum, O.; Vempala, S.~S.; and Zhang, E. 2025.
\newblock Why Language Models Hallucinate.
\newblock \emph{arXiv preprint}.
\newblock ArXiv:2509.04664.

\bibitem[{Kamath, Jia, and Liang(2020)}]{kamath2020selectiveqa}
Kamath, A.; Jia, R.; and Liang, P. 2020.
\newblock Selective Question Answering under Domain Shift.
\newblock In \emph{Proceedings of the 58th Annual Meeting of the Association
  for Computational Linguistics}, 5684--5696.

\bibitem[{Min et~al.(2023)Min, Krishna, Lyu, Lewis, tau Yih, Koh, Iyyer,
  Zettlemoyer, and Hajishirzi}]{min2023factscore}
Min, S.; Krishna, K.; Lyu, X.; Lewis, M.; tau Yih, W.; Koh, P.~W.; Iyyer, M.;
  Zettlemoyer, L.; and Hajishirzi, H. 2023.
\newblock {FactScore}: Fine-grained Atomic Evaluation of Factual Precision in
  Long Form Text Generation.
\newblock In \emph{Proceedings of the 2023 Conference on Empirical Methods in
  Natural Language Processing ({EMNLP})}.

\bibitem[{Nachshoni et~al.(2025)Nachshoni, Cattan, Amar, Shapira, and
  Dagan}]{nachshoni2025natconfqa}
Nachshoni, E.; Cattan, A.; Amar, S.; Shapira, O.; and Dagan, I. 2025.
\newblock Consensus or Conflict? Fine-Grained Evaluation of Conflicting Answers
  in Question-Answering.
\newblock In \emph{Proceedings of the 2nd Workshop on Uncertainty-Aware {NLP}
  ({UncertaiNLP})}, 138--159.

\bibitem[{Pavlick and Kwiatkowski(2019)}]{pavlick2019disagreements}
Pavlick, E.; and Kwiatkowski, T. 2019.
\newblock Inherent Disagreements in Human Textual Inferences.
\newblock \emph{Transactions of the Association for Computational Linguistics},
  7: 677--694.

\bibitem[{Plank(2022)}]{plank2022labelvariation}
Plank, B. 2022.
\newblock The ``Problem'' of Human Label Variation: On Ground Truth in Data,
  Modeling and Evaluation.
\newblock In \emph{Proceedings of the 2022 Conference on Empirical Methods in
  Natural Language Processing ({EMNLP})}. Association for Computational
  Linguistics.
\newblock Aclanthology.org/2022.emnlp-main.731.

\bibitem[{Ren et~al.(2023)Ren, Zhao, Vu, Liu, and
  Lakshminarayanan}]{ren2023selective}
Ren, J.; Zhao, Y.; Vu, T.; Liu, P.~J.; and Lakshminarayanan, B. 2023.
\newblock Self-Evaluation Improves Selective Generation in Large Language
  Models.
\newblock In \emph{Proceedings on I Can't Believe It's Not Better: Failure
  Modes in the Age of Foundation Models at NeurIPS 2023 Workshops}, 49--64.

\bibitem[{Schlichtkrull et~al.(2024)Schlichtkrull, Chen, Whitehouse, Deng,
  Akhtar, Aly, Guo, Christodoulopoulos, Cocarascu, Mittal, Thorne, and
  Vlachos}]{schlichtkrull2024sharedtask}
Schlichtkrull, M.; Chen, Y.; Whitehouse, C.; Deng, Z.; Akhtar, M.; Aly, R.;
  Guo, Z.; Christodoulopoulos, C.; Cocarascu, O.; Mittal, A.; Thorne, J.; and
  Vlachos, A. 2024.
\newblock The Automated Verification of Textual Claims ({AVeriTeC}) Shared
  Task.
\newblock In \emph{Proceedings of the Seventh Fact Extraction and
  {VER}ification Workshop ({FEVER})}, 1--26.

\bibitem[{Schlichtkrull, Guo, and Vlachos(2023)}]{schlichtkrull2023averitec}
Schlichtkrull, M.~S.; Guo, Z.; and Vlachos, A. 2023.
\newblock {AVeriTeC}: A Dataset for Real-world Claim Verification with Evidence
  from the Web.
\newblock In \emph{Proceedings of the 37th Conference on Neural Information
  Processing Systems Track on Datasets and Benchmarks}.

\bibitem[{Schuster, Fisch, and Barzilay(2021)}]{schuster2021vitaminc}
Schuster, T.; Fisch, A.; and Barzilay, R. 2021.
\newblock Get Your Vitamin {C}! Robust Fact Verification with Contrastive
  Evidence.
\newblock In \emph{Proceedings of the 2021 Conference of the North American
  Chapter of the Association for Computational Linguistics ({NAACL})}.
\newblock ArXiv:2103.08541.

\bibitem[{Sun et~al.(2026)Sun, Warren, Shklovski, and Augenstein}]{sun2026clue}
Sun, J.; Warren, G.; Shklovski, I.; and Augenstein, I. 2026.
\newblock Explaining Sources of Uncertainty in Automated Fact-Checking.
\newblock In \emph{Proceedings of the 64th Annual Meeting of the Association
  for Computational Linguistics (Volume 1: Long Papers)}, 45510--45534.
  Association for Computational Linguistics.

\bibitem[{Wallat, Nejdl, and Sikdar(2026)}]{wallat2026facts}
Wallat, J.; Nejdl, W.; and Sikdar, S. 2026.
\newblock When Facts Change: Temporal Knowledge Conflict Resolution in {LLM}s.
\newblock In \emph{Findings of the Association for Computational Linguistics:
  {ACL} 2026}, 2154--2184. San Diego, California, United States: Association
  for Computational Linguistics.

\bibitem[{Zheng et~al.(2023)Zheng, Chiang, Sheng, Zhuang, Wu, Zhuang, Lin, Li,
  Li, Xing, Zhang, Gonzalez, and Stoica}]{zheng2023mtbench}
Zheng, L.; Chiang, W.-L.; Sheng, Y.; Zhuang, S.; Wu, Z.; Zhuang, Y.; Lin, Z.;
  Li, Z.; Li, D.; Xing, E.~P.; Zhang, H.; Gonzalez, J.~E.; and Stoica, I. 2023.
\newblock Judging {LLM}-as-a-Judge with {MT}-Bench and Chatbot Arena.
\newblock In \emph{Advances in Neural Information Processing Systems
  ({NeurIPS}) Datasets and Benchmarks Track}.
\newblock ArXiv:2306.05685.

\end{thebibliography}

\appendix
\onecolumn
\section{Prompts}
\label{sec:appendix:prompts}

Each prompt below is the byte content of the template file the runner loads,
named above its listing; the same files are in the code release under
\texttt{scripts/option\_a\_exp/prompts/judges/}, where four prompts are
shipped and the name identifies which one this is.
\verb|<<CLAIM>>| and \verb|<<EVIDENCE>>| are the
substitution points for the claim text and the rendered gold
question--answer evidence block. Line breaks are the template's own, except
where a line exceeds the measure and is wrapped by the typesetter; those
wraps carry a \textcolor{promptrule}{$\hookrightarrow$} marker, so a wrap is
never mistaken for a newline in the prompt.

\subsection{Typed four-option verdict prompt}
\label{sec:appendix:prompts:judge}

This template is shared by all panel judges and by all four contemporary
individual judges. Note the four enumerated trigger conditions
\mbox{(a)--(d)}, the imperative sentence forbidding collapse to
\textsc{Refutes}, and the three demonstrations, each of which resolves
to \textsc{Conflicting}. As discussed in \S\ref{sec:method}, this makes
every directional-overcommitment rate in the paper a conservative lower
bound.

\smallskip
\noindent\texttt{honest\_4opt\_strong.txt}

\begin{lstlisting}[style=promptstyle]
You are a fact-verification judge. Given a claim and supporting evidence, output your verdict.

Use four possible verdicts:
- SUPPORTS: the evidence consistently supports the claim
- REFUTES: the evidence consistently refutes the claim
- NOT_ENOUGH_INFO: the evidence does not address the claim, or is insufficient
- CONFLICTING: prefer this when ANY of the following holds:
  (a) the claim contains multiple sub-parts, AND the evidence supports some sub-parts while refuting others
  (b) the evidence contains both supporting and refuting elements for the same sub-part
  (c) the claim is literally true on its surface but selectively omits context that would change the impression (cherry-picking)
  (d) the claim mixes a true numerical/factual core with an incorrect or misleading framing

When CONFLICTING applies, do NOT collapse it to REFUTES even if the misleading aspect is the dominant impression. Cherry-picking by selective truth is a distinct category from outright falsehood and MUST be reported as CONFLICTING.

Examples (illustrative):
- Claim: "Country X had 75,000 flu deaths in 2019, almost 0 in 2020."
  Evidence: actual 2019 deaths were 21,909; 2020 deaths near zero is correct.
  (*@$\rightarrow$@*) CONFLICTING (one sub-part false, one sub-part true).
- Claim: "Policy Y supports A, B, C, and D."
  Evidence: A is false, B is true, C is unclear, D is partially true.
  (*@$\rightarrow$@*) CONFLICTING (mixed across sub-parts), NOT REFUTES.
- Claim: "Statistic Z is at all-time low this year."
  Evidence: confirms Z is low, but omits seasonal context that would change the interpretation.
  (*@$\rightarrow$@*) CONFLICTING (cherry-picking / misleading framing).

Respond with VALID JSON ONLY in the following exact schema:
{
  "reasoning": "1-3 sentences explaining how the evidence supports your verdict",
  "verdict": "SUPPORTS" | "REFUTES" | "NOT_ENOUGH_INFO" | "CONFLICTING",
  "confidence": <float between 0.0 and 1.0>
}

Do not include any text before or after the JSON object.

Claim: <<CLAIM>>

Evidence:
<<EVIDENCE>>
\end{lstlisting}

\subsection{Certificate validator prompt}
\label{sec:appendix:prompts:validator}

The validator runs independently of the panel and is used only as the
structural channel of the two-channel reference probe
(\S\ref{sec:method:controllers}). Its \texttt{final\_verdict} field is
not consumed by the probe; only the per-subclaim
\texttt{evidence\_state} assignments are.

\smallskip
\noindent\texttt{certificate\_strict\_fewshot.txt}

\begin{lstlisting}[style=promptstyle]
You are a fact-verification judge producing a structured evidence-state certificate.

Given a claim and numbered evidence, decompose the claim into material subclaims and assess the evidence state for each subclaim.

Allowed evidence_state values:
- supports: evidence supports this subclaim
- refutes: evidence refutes this subclaim
- insufficient: evidence does not establish this subclaim either way
- mixed: evidence both supports and refutes this subclaim, or the subclaim is true only with missing context

Final verdict rules:
- SUPPORTS only if every material subclaim is supported and there is no material refuting or mixed evidence.
- REFUTES only if every material subclaim is refuted and there is no material supporting or mixed evidence.
- NOT_ENOUGH_INFO if the material evidence is absent or insufficient and there is no material support/refute mixture.
- CONFLICTING if material support and material refutation coexist, or if any material subclaim is mixed.

Examples:

Example 1:
Claim: "Policy Y supports A, B, C, and D."
Evidence: A is false. B is true. C is unclear. D is partially true.
Output final_verdict: CONFLICTING, because support and refutation coexist across material subclaims.

Example 2:
Claim: "Statistic Z is at an all-time low this year."
Evidence: Z is low this year, but the same seasonal pattern occurs every year and the evidence omits that context.
Output final_verdict: CONFLICTING, because the claim is true only with misleading omitted context.

Example 3:
Claim: "Person A said they would ban technology B."
Evidence: Person A said they would restrict new permits for B, but also said existing uses of B would continue.
Output final_verdict: CONFLICTING, because a blanket ban is refuted but a weaker restriction claim is supported.

Respond with VALID JSON ONLY in the following exact schema:
{
  "subclaims": [
    {
      "text": "atomic material subclaim",
      "material": true,
      "evidence_state": "supports" | "refutes" | "insufficient" | "mixed",
      "evidence_ids": ["E1", "E2"],
      "rationale": "short explanation"
    }
  ],
  "reasoning": "1-3 sentences explaining the final evidence state",
  "final_verdict": "SUPPORTS" | "REFUTES" | "NOT_ENOUGH_INFO" | "CONFLICTING",
  "confidence": <float between 0.0 and 1.0>
}

Do not include any text before or after the JSON object.

Claim: <<CLAIM>>

Numbered evidence:
<<EVIDENCE>>
\end{lstlisting}

\section{Qualitative Case Studies}
\label{sec:appendix:cases}

We provide six representative cases to illustrate the failure modes the
controller catches and the failure modes it does not. The cases are
illustrative, not exhaustive; no statistical rate is derived from the
$N=10$ qualitative case study.

\paragraph{Case 1 --- clean Cherry-pick Override.}
For the claim that US chain migration lets a person bring dozens of
relatives, the cited evidence both refutes the direct reading (one
cannot directly petition an aunt, uncle, or cousin) and supports an
indirect multi-step chain (a naturalized citizen can petition a parent,
who can later petition siblings). The panel issues \textsc{Refutes} at
mean confidence $0.91$; gold is \textsc{Conflicting}. This is a clean CPO
in the sense of Section~\ref{sec:method}: the validator's subclaim
decomposition flags \texttt{material\_mixed}, and the two-channel
controller downgrades to \textsc{Conflicting} regardless of confidence.

\paragraph{Case 2 --- panel amplification on AVeriTeC.}
For the claim ``\emph{Waving the British flag will result in arrest for
breach of the peace}'' (AVeriTeC case 148), the cited evidence has two
strands: a general rule (you cannot be arrested for waving the British
flag) and a Northern-Ireland-specific qualifier (display of flags
including the Union flag has been politically and legally controversial,
with the Police Service of Northern Ireland treating loyalist flag
erection in mixed-population areas as a public-order matter). One typed
judge votes \textsc{Conflicting}, recognizing the qualifier; the other
two judges default to \textsc{Refutes} on the general rule. Majority
voting suppresses the dissent and the panel commits \textsc{Refutes} at
mean confidence $0.89$; gold is \textsc{Conflicting}. The case
exemplifies the population-level finding that panel
$\mathrm{CPO}_{\mathrm{C}}=0.887$ exceeds single-judge $0.840$ on
AVeriTeC (paired-bootstrap CI $[+0.013,+0.080]$, separated from zero);
the effect does not replicate on VitaminC-Mixed under the same
configuration (CI $[-0.060,+0.060]$, straddles zero).

\paragraph{Case 3 --- high-confidence CPO.}
For ``face masks cause hypoxia,'' evidence refutes the claim for healthy
individuals while a separate strand notes prolonged N-95 use in patients
with preexisting lung disease could raise CO\textsubscript{2}. The panel
issues \textsc{Refutes} at mean confidence $0.96$; no confidence
threshold within the explored range ($\tau\le0.95$) blocks the commit.
The validator flags the qualifier subclaim as \texttt{material\_mixed},
and the controller downgrades to \textsc{Conflicting}. The case is the
canonical illustration that confidence and structural evidence provide
complementary information.

\paragraph{Case 4 --- good validator veto.}
For ``Shazad Latif is of Pakistani descent,'' two near-identical evidence
sentences describe the subject as of mixed ``South Asian'' versus mixed
``Pakistani'' descent. The panel issues \textsc{Supports} at confidence
$0.87$, which the confidence channel alone would authorize; the validator
flags \texttt{material\_mixed}, and the controller downgrades to
\textsc{Conflicting}. The auditor's independent reading agrees that a
directional verdict would be misleading. The structural-evidence channel
adds signal the confidence channel cannot recover.

\paragraph{Case 5 --- false validator veto.}
For the claim that Justice Ruth Bader Ginsburg was known for a grueling
fitness regime, the evidence is overwhelmingly supportive (a long-term
trainer, twice-weekly sessions, push-ups). The panel issues
\textsc{Supports} at confidence $0.90$, in agreement with the auditor's
reading, and the confidence channel would have committed correctly. The
validator nonetheless flags a peripheral subclaim as
\texttt{material\_mixed}, and the controller over-blocks to
\textsc{Conflicting}. This is the honest cost of a deterministic veto: it
cannot weight subclaim importance against the main directional question.

\paragraph{Case 6 --- subclaim mismatch.}
For ``for a cumulative 29 of Nigeria's 60 years \dots\ under military
rule,'' the evidence cleanly supports the arithmetic ($13+16=29$). The
validator tags one year-range subclaim as \texttt{material\_mixed},
likely because the evidence does not state the cumulative total
explicitly, and the controller over-blocks despite the directional
verdict being arithmetically correct. Here the validator's
\emph{decomposition} is upstream of its veto rule---a different failure
from Case~5.

Cases~5 and~6 (the validator's failure modes) motivate the follow-up
emphasis on learned evidence-aware controllers and richer
structured-evidence parsers.

\section{Cost-Sensitive Operating-Point Analysis}
\label{sec:appendix:cost}

We convert the same-denominator outcome counts in
Table~\ref{tab:main_results} into per-case expected utility under three
named deployment profiles. The point of the appendix is not to argue
that the controller is preferred in general; it is to make the
deployment trade-off explicit so a reader can see when the
authorization layer is rational and when it is not.

\paragraph{Per-case utility.}
Let an outcome category $o \in \mathcal{O}$ count one of:
$\textsc{correct-S/R}$ (correct directional commit on pure-S/R gold),
$\textsc{correct-Conf}$ (correct \textsc{Conflicting} commit on
gold-\textsc{Conflicting}), $\textsc{wrong-S/R}$ (wrong directional
commit on pure-S/R gold), $\textsc{CPO}$ (directional commit on
gold-\textsc{Conflicting}), $\textsc{false-Conf}$ (\textsc{Conflicting}
commit on pure-S/R gold), $\textsc{No-Commit}$. A profile assigns each
category a cost $w_o$; the per-case utility is the average $w_o$ over
all $N$ cases.

\begin{table}[t]
\centering
\setlength{\tabcolsep}{4pt}
\caption{Three named cost profiles. Profile A treats CPO as $5\times$
costlier than $\textsc{No-Commit}$; Profile C treats it as $12\times$.}
\label{tab:cost_profiles}
\begin{tabular}{lrrrrrr}
\toprule
Profile & correct-S/R & correct-Conf & wrong-S/R & CPO & false-Conf & No-Commit \\
\midrule
A (accuracy)        & $+1$ & $+1$ & $-3$ & $-5$  & $-2$ & $-1$ \\
B (balanced)        & $+1$ & $+1$ & $-5$ & $-8$  & $-2$ & $-1$ \\
C (safety-critical) & $+1$ & $+1$ & $-8$ & $-12$ & $-3$ & $-1$ \\
\bottomrule
\end{tabular}
\end{table}

\paragraph{Per-system results.}
Table~\ref{tab:cost_results} reports per-case utility for each
controller under each profile on AVeriTeC ($N=285$) and VitaminC-Mixed
($N=250$). F~$\tau{=}0.90$ is the per-case utility maximiser on every
(dataset $\times$ profile) cell at these profile settings. The
mechanism is structural rather than driven by the CPO penalty
magnitude: Stage~1 converts CPO events into
$\textsc{correct-Conf}$ credit ($+1$) rather than $0$; at cost
ratios where $\textsc{wrong-S/R}$ and $\textsc{CPO}$ are equal,
typed direct (L1) becomes competitive on AVeriTeC because it commits
more often.

\begin{table}[t]
\centering
\small
\setlength{\tabcolsep}{4pt}
\caption{Per-case utility ($\uparrow$ better) on AVeriTeC ($N=285$) and
VitaminC-Mixed ($N=250$) for the three profiles defined in
Table~\ref{tab:cost_profiles}.}
\label{tab:cost_results}
\begin{tabular}{llrrr}
\toprule
Dataset & System & A & B & C \\
\midrule
AVeriTeC & typed-direct (L1)       & $-0.158$ & $-0.502$ & $-1.046$ \\
         & validator-only (L4)     & $-0.011$ & $-0.197$ & $-0.583$ \\
         & E $\tau$=0.85 (L3)      & $-0.109$ & $-0.404$ & $-0.881$ \\
         & E $\tau$=0.90 (L3)      & $-0.039$ & $-0.214$ & $-0.526$ \\
         & F $\tau$=0.85 (L5)      & $+0.018$ & $-0.140$ & $-0.488$ \\
         & \textbf{F $\tau$=0.90 (L5)} & $\mathbf{+0.032}$ & $\mathbf{-0.070}$ & $\mathbf{-0.340}$ \\
\midrule
VitaminC & typed-direct (L1)       & $-0.596$ & $-1.196$ & $-2.064$ \\
         & validator-only (L4)     & $-0.460$ & $-0.944$ & $-1.684$ \\
         & E $\tau$=0.85 (L3)      & $-0.572$ & $-1.132$ & $-1.940$ \\
         & E $\tau$=0.90 (L3)      & $-0.524$ & $-1.004$ & $-1.700$ \\
         & F $\tau$=0.85 (L5)      & $-0.436$ & $-0.888$ & $-1.580$ \\
         & \textbf{F $\tau$=0.90 (L5)} & $\mathbf{-0.428}$ & $\mathbf{-0.832}$ & $\mathbf{-1.456}$ \\
\bottomrule
\end{tabular}
\end{table}

\paragraph{Sensitivity to the asymmetry.}
The asymmetry between $\textsc{wrong-S/R}$ and $\textsc{CPO}$ costs is
the load-bearing assumption of all three profiles in
Table~\ref{tab:cost_profiles}: a deployment that values the two equally
($w_{\textsc{wrong-S/R}}=w_{\textsc{CPO}}$) collapses the L5 advantage on
AVeriTeC, because L1's higher commit rate compensates for its higher CPO.
The cost-sensitive analysis is therefore deployment-conditional: it
identifies a regime in which an authorization layer is rational, not a
universal recommendation.

\section{Cross-Dataset Replication: VitaminC-Mixed}
\label{sec:appendix:vitaminc}

We mirror Table~\ref{tab:main_results}'s ladder layout on the VitaminC-Mixed
substrate ($N=250$; $N_{\mathrm{S/R}}=100$, $N_{\mathrm{C}}=100$). The
VitaminC-Mixed pool is built by concatenating the supporting and refuting
sentence in a VitaminC contrastive pair to synthesise an evidence set
that is mixed by construction. Mixed-evidence claims here therefore have
a different distributional character than AVeriTeC's natural
Conflicting/Cherrypicking subset, and this appendix isolates which
findings transfer.

\begin{table}[t]
\centering
\setlength{\tabcolsep}{4pt}
\caption{Cross-dataset replication on VitaminC-Mixed
($N{=}250$). Metric definitions match Table~\ref{tab:main_results}.
L0 single-3-opt and panel-3-opt are reported only on the conflicting
subset for $\mathrm{CPO}_{\mathrm{C}}$ and $\mathrm{Rec}_{\mathrm{C}}$;
the L0 panel and single rows tie ($\mathrm{CPO}_{\mathrm{C}}=0.77$),
showing the AVeriTeC panel-amplification effect does not transfer here.}
\label{tab:vitaminc_results}
\begin{tabular}{lrrrrrr}
\toprule
System & $\mathrm{Cov}$ & $\mathrm{SE}$ & $\mathrm{CPO}_N$ & $\mathrm{CPO}_{\mathrm{C}}$ & $\mathrm{Acc}_{\mathrm{S/R}}$ & $\mathrm{Rec}_{\mathrm{C}}$ \\
\midrule
\multicolumn{7}{l}{\emph{L0 --- three-option direct}} \\
Single 3-opt        & --- & --- & --- & 0.770 & --- & 0.000 \\
Panel 3-opt         & --- & --- & --- & 0.770 & --- & 0.000 \\
\midrule
\multicolumn{7}{l}{\emph{L1 --- typed panel direct (A)}} \\
Panel + typed       & 0.536 & 0.448 & 0.120 & 0.300 & 0.740 & 0.700 \\
\midrule
\multicolumn{7}{l}{\emph{L3 --- confidence-only (E)}} \\
$\tau$=0.85        & 0.508 & 0.433 & 0.120 & 0.300 & 0.720 & 0.700 \\
$\tau$=0.90        & 0.452 & 0.416 & 0.104 & 0.260 & 0.660 & 0.700 \\
$\tau$=0.95        & 0.208 & 0.212 & 0.024 & 0.060 & 0.410 & 0.700 \\
\midrule
\multicolumn{7}{l}{\emph{L4 --- validator-veto only (D)}} \\
conflict\_only     & 0.476 & 0.412 & 0.092 & 0.230 & 0.700 & 0.770 \\
\midrule
\multicolumn{7}{l}{\emph{L5 --- two-channel reference probe (F)}} \\
$\tau$=0.85        & 0.456 & 0.395 & 0.092 & 0.230 & 0.690 & 0.770 \\
$\tau$=0.90        & 0.412 & 0.388 & 0.084 & 0.210 & 0.630 & 0.770 \\
$\tau$=0.95        & 0.204 & 0.216 & 0.024 & 0.060 & 0.400 & 0.770 \\
\bottomrule
\end{tabular}
\end{table}

\paragraph{What transfers.}
(i)~The vocabulary-fix step (L0$\to$L1) reduces conditional CPO from
$0.770$ to $0.300$ --- same direction as on AVeriTeC, though residual
remains. (ii)~Across L1$\to$L3$\to$L5 the operating-point pattern at
$\tau{=}0.90$ matches AVeriTeC: L5 reduces $\mathrm{SE}$ and
$\mathrm{CPO}_N$ over L3 at comparable coverage. (iii)~The random-veto
control (Figure~\ref{fig:random_veto} on AVeriTeC) re-run on VitaminC
again places L5 at empirical $p<1/2001$ on
$\mathrm{Acc}_{\mathrm{S/R}}$ and $\mathrm{Rec}_{\mathrm{C}}$ at matched
coverage (see Section~\ref{sec:ablation}).

\paragraph{What does not transfer.}
(i)~Panel amplification at L0 (AVeriTeC: $+0.047$, CI $[+0.013,+0.080]$)
does not replicate on VitaminC ($0.000$, CI $[-0.060,+0.060]$). The
synthetic construction of VitaminC-Mixed concatenates a single
supporting and a single refuting sentence, and the typed panel agrees
unanimously on directional commitment at this level of evidence
homogeneity. (ii)~At the extreme low-coverage regime ($\tau{=}0.95$),
L3 and L5 tie on every metric; both reach $\mathrm{CPO}_N{=}0.024$ at
$\mathrm{Cov}{\approx}0.20$, where confidence alone already filters the
remaining mixed-evidence commits.

\paragraph{Panel amplification anatomy.}
Table~\ref{tab:amplification_anatomy} reports the per-judge vote
distribution on the gold-\textsc{Conflicting} subsets of both datasets,
which Section~\ref{sec:results:l2} cites inline. Under the 3-opt
schema, roughly two-thirds of conflicting cases have all three judges
voting directionally on both datasets, so amplification is dominated
by shared directional bias rather than majority overriding minority
dissent. Under the 4-opt typed schema, VitaminC actually has more
suppressed-dissent cases (22\% vs 9\%) but also far fewer
unanimous-directional cases, so panel CPO does not amplify.

\begin{table}[h]
\centering
\footnotesize
\setlength{\tabcolsep}{4pt}
\caption{Panel amplification anatomy on the gold-\textsc{Conflicting}
subset of each dataset. Rows partition the subset by how many of the
three judges voted directionally, so each block sums to $100\%$ up to
rounding. ``Unanimous directional'' is a $3/0$ split; ``dissent
suppressed'' is $2$ directional $+\,1$ non-directional, where the panel
majority commits despite a minority vote to withhold; ``minority
directional'' is $1{+}2$, where the same aggregation rule correctly
discards the directional vote. ``Other'' is cases with an incomplete
vote record.}
\label{tab:amplification_anatomy}
\begin{tabular}{lrr}
\toprule
Vote pattern (gold $=$ \textsc{Conflicting}) & AVeriTeC & VitaminC \\
\midrule
\multicolumn{3}{l}{\emph{3-opt panel ($N_{\mathrm{C}}{=}150 / 100$)}} \\
unanimous directional ($3/0$)                 & 70.7\% & 66.0\% \\
dissent suppressed ($2$ dir $+$ 1 NEI)        & 18.0\% & 11.0\% \\
minority directional ($1$ dir $+$ 2 NEI)      &  4.7\% & 16.0\% \\
no directional vote ($0/3$)                   &  4.7\% &  7.0\% \\
other (incomplete votes)                      &  2.0\% & --- \\
\multicolumn{3}{l}{\emph{4-opt typed panel}} \\
unanimous directional ($3/0$)                 &  9.3\% & 13.0\% \\
\textsc{Conflicting} dissent suppressed       &  9.3\% & 22.0\% \\
minority directional ($1$ dir $+$ 2 non-dir)  & 24.0\% & 24.0\% \\
no directional vote ($0/3$)                   & 56.0\% & 41.0\% \\
other (incomplete votes)                      &  1.3\% & --- \\
\bottomrule
\end{tabular}\\[3pt]
\begin{minipage}{\columnwidth}
\footnotesize Percentages here are over the full gold-\textsc{Conflicting}
subset, so AVeriTeC's $9.3\%$ suppressed-dissent row is $14$ of $150$ cases.
\S\ref{sec:results:l2} instead reports $13$ of $27$, because it conditions on
the panel's CPO commits and drops the one whose vote record is incomplete:
the $3/0$ versus $2/1$ anatomy is undefined for that case.
\end{minipage}
\end{table}

\end{document}